\newtheorem{definition}{Definition}
\newtheorem{remark}{Remark}
\newtheorem{lemma}{Lemma}
\newtheorem{proposition}{Proposition}
\newtheorem{corollary}{Corollary}
\newtheorem{assumption}{Assumption}
\DeclareMathOperator*{\argmin}{arg\,min}
\begin{document}




\title{FeMLoc: Federated Meta-learning for Adaptive Wireless Indoor Localization Tasks in IoT Networks}

\author{Yaya Etiabi,~\IEEEmembership{Student Member,~IEEE},
Wafa Njima,~\IEEEmembership{Member,~IEEE},
El Mehdi Amhoud,~\IEEEmembership{Member,~IEEE},

\thanks{ Yaya Etiabi is with the College of Computing, Mohammed VI Polytechnic University, Benguerir, Maroc (e-mail: yaya.etiabi@um6p.ma).}
\thanks{ Wafa Njima is with the LISITE Laboratory, Institut Supérieur d'Électronique de Paris, Paris, France (e-mail: wafa.njima@isep.fr).}
\thanks{ El Mehdi Amhoud is with the College of Computing, Mohammed VI Polytechnic University, Benguerir, Maroc (e-mail: elmehdi.amhoud@um6p.ma).}

}



\maketitle

\begin{abstract}
The rapid growth of the Internet of Things fosters collaboration among connected devices for tasks like indoor localization.
However, existing indoor localization solutions struggle with dynamic and harsh conditions, requiring extensive data collection and environment-specific calibration. These factors impede cooperation, scalability, and the utilization of prior research efforts.
To address these challenges, we propose FeMLoc, a federated meta-learning framework for localization.
FeMLoc operates in two stages: (i) collaborative meta-training where a global meta-model is created by training on diverse localization datasets from edge devices. (ii) Rapid adaptation for new environments, where the pre-trained global meta-model initializes the localization model, requiring only minimal fine-tuning with a small amount of new data.
In this paper, we provide a detailed technical overview of FeMLoc, highlighting its unique approach to privacy-preserving meta-learning in the context of indoor localization.
Our performance evaluation demonstrates the superiority of FeMLoc over state-of-the-art methods, enabling swift adaptation to new indoor environments with reduced calibration effort.
Specifically, FeMLoc achieves up to 80.95\% improvement in localization accuracy compared to the conventional baseline neural network (NN) approach after only 100 gradient steps. Alternatively, for a target accuracy of around 5m, FeMLoc achieves the same level of accuracy up to 82.21\% faster than the baseline NN approach.
This translates to FeMLoc requiring fewer training iterations, thereby significantly reducing fingerprint data collection and calibration efforts.
Moreover, FeMLoc exhibits enhanced scalability, making it well-suited for location-aware massive connectivity driven by emerging wireless communication technologies.
\end{abstract}

\begin{IEEEkeywords}
Indoor positioning, federated meta-learning, meta-learning, federated learning, multi-environment learning, RSSI Fingerprinting.
\end{IEEEkeywords}
\section{Introduction}
\IEEEPARstart{T}{he} emergence of wireless technologies like 5G and beyond \cite{survey5g}, coupled with the integration of unmanned aerial vehicle-assisted networking \cite{nabiew}, is fueling a new era of ubiquitous connectivity. This, in turn, is driving the growth of location-based services (LBS) that are reshaping our daily lives and experiences.

By harnessing the power of geolocation data, LBS have revolutionized navigation, personalized recommendations, contextual information delivery, asset tracking, safety measures, and personalized services in various industries \cite{horsmanheimo2019indoor}. 

However, the potential of LBS remains constrained within enclosed spaces where traditional outdoor positioning systems, such as global positioning systems (GPS), often falter. To overcome this limitation and unlock the full capabilities of LBS within indoor environments, indoor localization technologies have emerged as a challenging area of research and development.

Indeed, indoor localization has seen significant growth due to the increasing demand for LBS supported by recent advancements in wireless technologies, and the proliferation of Internet of Things (IoT) devices \cite{Jouhari2022ASO}. Technologies like Wi-Fi, Bluetooth, inertial measurement units (IMUs), ultra-wideband (UWB), computer vision, and sensor fusion are driving this rise \cite{furfari2019next}.

With the densification of connected devices, traditional localization methods (i.e., trilateration, triangulation, fingerprinting, etc.) using wireless signal metrics (i.e., time of arrival (ToA), time difference of arrival (TDoA), radio signal strength indicator (RSSI), channel state information (CSI), etc.) fail to provide accurate localization and suffer from scalability, and the high dynamicity of indoor environments.

Hence the need for data-driven solutions where machine learning has proven to be a good fit, as enlightened in \cite{Etiabi2023FederatedLB}. 
In fact, with the uprise of machine learning (ML), RSSI \cite{Etiabi2022SpreadingFA} and CSI fingerprinting \cite{DBLP:journals/corr/abs-2210-14510} have taken over all the other wireless signal metrics used for localization. RSSI and CSI fingerprinting have become the primary tasks in designing an ML-based localization system. The massive adoption of ML techniques such as support vector machines (SVMs), random forests (RFs), and neural networks (NNs) allows indoor localization systems to achieve higher accuracy and robustness compared to traditional methods \cite{nessa2020survey}. 

However, traditional ML approaches face challenges related to data privacy, scalability, and generalization. Additionally, the collection of fingerprints in addition to being overwhelming and effort-intensive, is space- and time-bounded due to the dynamics of indoor environments. As a result, in
the literature, every single solution was designed for a specific indoor scenario and fails in scaling and generalization, promoting repetitive localization tasks with unused or underutilized knowledge from prior works or other environments. 

To overcome these limitations, we introduce FeMLoc, a federated meta-learning framework for fast adaptive localization tasks in IoT networks. This framework takes advantage of two
foundational learning paradigms namely federated learning (FL) \cite{McMahan17} and meta-learning (MTL) \cite{FinnAL17}.

Federated learning enables collaborative training of ML models across multiple edge devices or data sources while preserving data privacy. By leveraging local data, FL allows models to be trained directly on user devices, eliminating the need to transmit sensitive location data to a centralized server \cite{Etiabi2022FederatedDB, rev1p1}. 

The FL, however, typically provides a model for a specific task, which means that the resulting trained model is only beneficial for IoT devices (clients) who have participated in the training. Any new, unseen clients may be disadvantaged or discriminated against, especially if their tasks differ from those for which the FL was designed.

In response to this shortcoming of FL,
MTL emerges as a promising direction to enhance the adaptability and efficiency of indoor localization systems. Meta-learning leverages insights gained from previous learning experiences to facilitate fast adaptation to new environments or tasks.
The property becomes especially crucial in radio fingerprinting localization, where an extensive number of channel measurements are conducted to construct the radio map of the continuously sensed wireless environment \cite{hintikka, delaccess}.

MTL enables indoor localization models to acquire knowledge from a multitude of environments, leveraging this knowledge to rapidly adapt to novel indoor contexts, thereby reducing the need for extensive training data in each specific environment \cite{DBLP:journals/icl/WeiYWWZ23, DBLP:conf/icc/GaoZKYX022, DBLP:conf/globecom/Long0XH22}.
In light of the aforementioned advancements and challenges, adopting federated meta-learning for indoor localization holds tremendous potential, hence our proposed framework, FeMLoc.

In FeMLoc, MTL is employed centrally to build a meta-model that captures the underlying patterns of indoor localization across diverse environments. The meta-model is trained to be adaptable, learning from various indoor environments to quickly adjust to new, unseen environments with minimal additional data collection, thereby reducing calibration efforts. 

On the other hand, FL enables collaborative training of the meta-model (global meta-model) on the central server, utilizing local updates from edge devices without sharing their private data. This approach protects user privacy while facilitating knowledge sharing across devices. Essentially, FL fosters collaboration on data, while MTL empowers the model to learn from this collaborative knowledge.

By introducing FeMLoc, this research paper makes substantial contributions in several distinct areas, each showcasing its unique impact and benefits:
\begin{itemize}
    \item \textbf{Privacy Assurance:} FeMLoc addresses data privacy concerns by exclusively conducting model updates on individual client devices. This approach guarantees the confidentiality of sensitive location data, ensuring robust privacy protection without compromising the model's performance.
    \item \textbf{Inherent Scalability and Collaborative Training:} FeMLoc introduces inherent scalability by enabling collaborative model training across diverse clients. This emphasizes the framework's ability to scale without centralized data storage, facilitating seamless adaptation to evolving IoT network requirements.
    \item \textbf{Dynamic Adaptability for Real-World Scenarios:} The fusion of federated learning and meta-learning in FeMLoc creates a dynamic and adaptable framework. This unique combination allows rapid adaptation to changes in data distribution, evolving tasks, and varying network and environmental conditions, ensuring robust performance in real-world indoor positioning scenarios.
\end{itemize}
\noindent The remainder of this paper is organized as follows: Section \ref{sec:relworks} presents the related works while Section \ref{sec:sysm} depicts the problem formulation and the system model description. In Section \ref{sec:femloc}, we present our proposed framework whose performance evaluation is presented in Section \ref{sec:perfeval}. Finally, we conclude our work and set forth some perspectives in Section \ref{sec:conclusion}.

\section{Related works}
\label{sec:relworks}

\begin{table*}[!t]
\centering
\caption{RSSI Fingerprinting datasets}
\label{tab:datasets}
\begin{tabular}{l|l|l|l|l|l|l|l}
\hline
Reference    & Signal type & Publication date & Building & Floor & Area ($m^2$)  & \#Samples (train + test)& \#WAPs \\
\hline
UJIIndoorLoc \cite{ujiindoorloc} & Wi-Fi      & 2014    & 3        & 13    & 108703 & 21048(1997 + 1111)    & 520 \\
TUT2017  \cite{lohan_elena_simona_2017_889798} & Wi-Fi   & 2017  & 1 & 5  &2,570 (208x108)& 4648(697+3951)   & 991  \\
TUT2018 \cite{philipp_richter_2018_1161525}    & Wi-Fi     & 2018      & 1     &  3  &    --    & 1428 (446+982) & 489 \\
IEEEDataport \cite{49yg-5d21-19} & Wi-Fi & 2019 & 1 & 2 & 24,000 & 7175 & 72\\
UJILIBDB \cite{ltmsupport} & Wi-Fi &2017 & 1& 2& -- &63504 & 448\\
HDLCDataset \cite{hybrid} &Wi-Fi \& BLE & 2022 & 1 & 3 &386.25 & & 42+17 \\
UTSIndoorLoc \cite{utsindoorloc} &  Wi-Fi &2019 & 1& 16 & 44,000 &  9494 (9107 + 387) &  589  \\
UTMInDualSymFi \cite{utm} & Wi-Fi& 2022& 4& 14& & 53300& 250\\
DSIData \cite{dsi} & Wi-Fi& 2020& 1& 1& -- &1717 & 157\\
MSIData \cite{msi} & Wi-Fi& 2019 & 1& 1& 1000 (50x20) & 28915& 11\\
MTUT\_SAH1 \cite{mtut} & Wi-Fi &2021 & 1& 3& --& 9447& 775 \\
MTUT\_TIE1 \cite{mtut} & Wi-Fi &2021 & 1& 6 & -- &10683 & 613\\
\hline
\end{tabular}
\end{table*}

\subsection{RSSI Fingerprinting for Indoor Localization}
RSSI fingerprinting has been widely employed for indoor localization, leveraging the RSSI from wireless sources to determine location coordinates within buildings. Early methods such as K-nearest neighbors (KNN) and Gaussian processes have demonstrated effectiveness in mapping RSSI fingerprints to indoor positions \cite{knn, gp}. 

However, these traditional approaches often face challenges in scalability and adaptability to dynamic indoor environments, where signal strength variations and architectural changes can lead to localization inaccuracies.

Moreover, the calibration of such fingerprints is environment-dependent, leading to a tremendous release of RSSI datasets in the literature as shown in TABLE \ref{tab:datasets}. Indeed, for each RSSI-based indoor localization task, data are collected and calibrated to build a localization model for the specific environment. When an environmental or network change occurs, recalibration is needed and new data are collected. 

As a first attempt to address this concern, many deep learning techniques have been introduced to improve the localization systems by providing better accuracy and relatively low complexity \cite{zhou2022deepvip, yang2023deepwipos, liu2022uwb}. But still, these deep neural network (DNN) solutions fail to deal with the high variability in indoor spaces, further maintaining periodic re-calibrations.

To reduce this effort-intensive data collection and calibration, different learning paradigms of the DNN models have been proposed in the literature attempting to address these challenges. Notably, these include federated learning, transfer learning, and meta-learning, each with its strengths and limitations. 


\subsection{Federated Learning for Indoor Localization}
Federated learning has emerged as a promising paradigm for collaborative model training across distributed clients while preserving data privacy \cite{McMahan17}. Several studies have explored the application of federated learning to indoor localization tasks, enabling collective model updates from multiple sources without the need for centralizing their data \cite{Etiabi2022FederatedDB, FedLoc2020, Etiabi2023FederatedLB}.
This approach addresses concerns related to data security \& complexity and communication bottlenecks, showing potential for efficient indoor positioning solutions.

It facilitates the aggregation of diverse data sources, addresses privacy concerns by keeping data locally, conserves bandwidth resources by transmitting model updates instead of raw data, and promotes collaborative learning to improve the accuracy and adaptability of indoor positioning models. By harnessing the power of FL, indoor positioning systems can leverage the benefits of a large dataset, ensure privacy compliance, and overcome communication limitations, thus paving the way for more effective and privacy-preserving indoor localization solutions.

In \cite{FedLoc2020} for instance, the authors showcase a general FL framework for indoor localization where a set of devices collaboratively train a joint DNN model designed for localization problems.
Similarly \cite{Etiabi2023FederatedLB} proposed a hierarchical location prediction model built upon a federated learning framework which prove to be efficient in terms of accuracy and communication overhead compared to the traditional ML model.

Other works \cite{ibnatta2022indoor,9761235,guo2023fedpos} delve into the same direction by putting FL at the center of their localization systems design.
But still, FL applied to RSSI-based indoor positioning encounters limitations that include the heterogeneity of RSSI data collected from different devices, scalability challenges with a growing number of federated nodes, and difficulties in generalizing FL models across diverse indoor environments. 

Indeed, FL, in its typical implementation, produces task-specific models that are beneficial only to the IoT devices (clients) that participated in the training process \cite{rev1p2}. This limitation can pose challenges for new and unseen clients, as they may face disadvantages or discrimination if their tasks deviate from those for which the FL model was originally designed. The lack of adaptability and generalization of FL models to accommodate diverse tasks and scenarios hinders the equitable and inclusive application of FL in dynamic and evolving environments, hence the need to consider new paradigms including transfer learning and meta-learning.
\subsection{Transfer Learning in Indoor Localization}
The field of indoor localization has witnessed considerable exploration of transfer learning techniques to enhance localization performance\cite{chen2022few}. Previous research has delved into leveraging knowledge gained from one indoor environment to improve localization accuracy in a different setting. This often involves transferring pre-trained models or features from a source domain (the original indoor environment) to a target domain (the new indoor environment), with the aim of mitigating the need for extensive data collection and training in the new environment \cite{maghdid2022enabling, guo2023fedpos, stahlke2022transfer}.

In these works, existing models were originally constructed in typically well-mapped source environments using data primarily obtained from these environments. These models subsequently serve as foundational frameworks for adaptation to target environments. The adaptation process involves fine-tuning the parameters of the new indoor localization models using data samples collected from the target environments.
This approach has exhibited the potential to enhance the performance of localization models within these designated environments.

However, transfer learning typically relies on a single source environment, which may not capture the full range of variability and complexities present in diverse indoor environments. This constraint hinders the effectiveness and the generalization of learned models to various unseen target environments.

Another shortcoming of conventional transfer learning for indoor localization is its assumption that source and target environments share the same characteristics, especially the network layouts which that not reflect the reality of the actual indoor environment heterogeneity.
These limitations have prompted the emergence of meta-learning techniques in indoor localization.

\subsection{Meta-Learning in Indoor Localization}
Meta-learning, also known as "learning to learn," focuses on enabling models to quickly adapt and generalize to new environments or tasks by leveraging knowledge acquired from previous learning experiences, as shown in \cite{FinnAL17} where its effectiveness has been established.
Within the field of indoor localization, recent studies have explored the potential of meta-learning techniques to enhance the accuracy, robustness, and efficiency of indoor positioning systems \cite{DBLP:journals/corr/abs-2210-14510, DBLP:journals/icl/WeiYWWZ23, owfi2023meta}. 

 Indeed, in \cite{DBLP:journals/corr/abs-2210-14510} for instance, the authors proposed a meta-learning-based localization system with CSI fingerprinting within a multi-environment-based mobile network. Their solution consists of a deep learning (DL) model with two components: the first part focuses on acquiring environment-independent features, while the second part integrates these features based on the specific environmental context. Their contribution lies in introducing a meta-learning approach to training the first part across multiple environments in order to improve the model's generalizability.
 
 Subsequently, in \cite{DBLP:conf/icc/GaoZKYX022} and \cite{metalocj} the authors developed a meta-learning-based indoor localization system where the model that is assumed to be a DNN, and they optimized a set of group-specific meta-parameters using historical data collected from diverse, well-calibrated indoor scenarios and the maximum mean discrepancy criterion.
As of the time of this writing, the most recent contributions in that direction include \cite{DBLP:journals/icl/WeiYWWZ23} and \cite{owfi2023meta} which are both incremental to \cite{DBLP:conf/icc/GaoZKYX022}.

This glance at the literature review shows that the use of meta-learning techniques in the field of indoor localization, as shown in the above works, enables models to rapidly adapt to new indoor environments, improve localization accuracy, and adapt to changing conditions.

However, these state-of-the-art works operate under the assumption of fixed and predefined network geometries, a premise that poses challenges to the transferability of the meta-model to novel networks characterized by distinct geometries or layouts. This inherent constraint limits the adaptability of the meta-model to network environments featuring varying spatial arrangements and configurations. 

Furthermore, the training process within the presented frameworks is tailored for a predetermined and fixed number of access points. This design choice inadvertently imposes a requirement for new environments to conform to the same number of APs, restricting the versatility of the framework in accommodating diverse AP densities or network structures. 

Consequently, the applicability of their frameworks to scenarios characterized by dynamic network topologies or varying numbers of APs is notably constrained. These constraints constitute the core motivation of the new meta-learning framework we proposed in this work. With respect to data privacy and communication constraints, we built our system upon the federated learning framework, unlocking a step forward within the indoor localization domain: federated meta-learning.

\subsection{Federated Meta-Learning}
Building upon the strengths of federated learning, transfer learning, and meta-learning, federated meta-learning \cite{DBLP:journals/corr/abs-2210-13111} presents a novel approach that holds the potential to address the challenges of fingerprinting-based indoor localization in dynamic environments.
By combining collaborative model updates from diverse clients with meta-learning's rapid adaptation capabilities, federated meta-learning showcases promising potential to revolutionize indoor localization technologies.

This method has the potential to overcome challenges related to data privacy, scalability, and generalization while achieving improved localization accuracy and adaptability. Therefore, the integration of federated learning with diverse sensor data sources and meta-learning techniques opens up new avenues for adaptive and privacy-preserving indoor localization solutions.

In summary, the related works presented in this section underscore the evolving landscape of indoor localization research, showcasing the progression from traditional methods like RSSI fingerprinting and transfer learning to emerging paradigms such as federated learning and meta-learning. Building upon these foundations, the introduction of federated meta-learning \cite{fedmeta} offers a compelling avenue to enhance indoor positioning solutions, by leveraging collaborative updates, rapid adaptation, and knowledge transfer across diverse indoor environments. The subsequent sections delve deeper into our proposed federated meta-learning framework, highlighting its unique attributes and contributions in addressing the challenges of fingerprinting-based indoor localization in dynamic settings.
\section{System model and problem formulation}
\label{sec:sysm}
\subsection{Problem Formulation}
Our research is focused on developing an indoor positioning system for IoT devices that utilizes fingerprinting techniques. Traditionally, creating a radio map of the indoor wireless environment requires collecting radio fingerprints at various reference points, which is a time-consuming and labor-intensive process, especially when high precision is necessary. Previous works on indoor localization have followed this approach for different indoor scenarios and environments.

Our objective is to break free from the cycle of continuous fingerprint collection for new systems. Instead, we aim to use the knowledge gained from previous research to simplify the fingerprint collection process and expedite the development of new positioning systems. By distilling insights and experiences from prior works, we aim to reduce the complexity associated with fingerprint collection and accelerate the implementation of new positioning systems formulated as localization tasks.
\begin{definition}[Localization task] 
\label{def:task}
Given an area of interest in a wireless environment, a localization task involves learning to map radio fingerprint measurements from the available APs to the locations where the measurements were taken.
Such a task is defined by $\mathcal{T}(\mathcal{D},\mathcal{L})$, with $\mathcal{D}$ the radio fingerprints dataset and $\mathcal{L}$ the loss associated with a learning model.
The set of available APs constitutes the signal space of the localization task.
Furthermore, we designate by $\rho(\mathcal{T})$ the distribution of the localization tasks $\mathcal{T}$. Indeed, this distribution provides valuable insights into the diversity and characteristics of the localization tasks within our study. 
\end{definition}

As for conventional ML, the dataset is split into training and test sets corresponding to the support set $\mathcal{D}^s$ and query set $\mathcal{D}^q)$ respectively in MTL, that is $\mathcal{D} = (\mathcal{D}^s, \mathcal{D}^q)$.
In the case of indoor localization, each localization task associated with a dataset $\mathcal{D}$ is specifically designed for a particular indoor space, resulting in repetitive indoor positioning system (IPS) design efforts. This repetition is due to the high variability in wireless infrastructure, geometry, occupancy, and other configurations across different indoor environments. Unfortunately, existing IPS designs do not take advantage of prior works, leading to a significant loss in human effort.

To address this issue, our work aims to leverage federated meta-learning to develop a framework that accelerates the development of new localization systems by extracting knowledge from previous works. We assume the existence of a set of localization tasks that will be used to train the proposed framework. However, it is crucial to note that these tasks involve different input spaces and network geometries depending on the specific indoor environment. This diversity makes it impractical to utilize the conventional MTL framework described in the literature.
Therefore, our proposed approach presents a tailored solution that is suitable for dynamic indoor localization tasks, overcoming the limitations of existing MTL frameworks.
\begin{definition}[Meta signal space] 
\label{def:msignal}
Suppose that wireless environments characterizing the tasks $\{\mathcal{T}_k\}_{k=1,2,...,K}$ have different signal spaces, i.e., different scales in terms of the number of APs as well as the deployment strategies. For the training tasks, the median of the respective numbers of APs is considered as the meta signal space dimension and it is given by:
\vspace{-1.5mm}
\begin{equation}
\label{eq:signal}
d = \operatorname{Median}(\mathcal{M}) 
= \begin{cases}\mathcal{M}\left[\frac{K+1}{2}\right] & \text { if } \mathrm{K} \text { is odd } \\
\frac{\mathcal{M}\left[\frac{K}{2}\right]+\mathcal{M}\left[\frac{K}{2}+1\right]}{2} & \text { if } \mathrm{K} \text { is even }\end{cases},
\vspace{-1.5mm}
\end{equation}
where $\mathcal{M} = \{m_1, m_2, ..., m_K \}$ is the ordered list of the numbers of APs and
$K$ the number of localization tasks. An environment is down-scaled when the original signal space is higher than the meta signal space and up-scaled otherwise to match the set meta signal space.
\end{definition}
To balance the up-scaling and down-scaling of training environments, we choose to use the median instead of the mean. The mean tends to focus on the overall available access points. As will be described in Section \ref{sec:sysm}, our proposed framework introduces the meta-signal space, which allows the framework to adapt to the signal space of each training task. This yields suitable and efficient multi-task learning, where meta-knowledge is learned about the existing tasks so that any unseen tasks can rapidly adapt with just a few measurement samples.
\begin{definition}[Meta-knowledge $\boldsymbol{\theta}$]
By  ``learning how to learn'', the MTL algorithm ends up with, to some extent, a deep understanding of the underlying properties of the existing learning tasks, embedded in a model $\boldsymbol{\theta}$ called meta-knowledge. This is done through meta-training over the existing learning tasks, empowering $\boldsymbol{\theta}$ with the ability to generalize and quickly adapt to new unseen tasks with similar distribution during meta-testing.
\end{definition}

In our investigation of indoor localization, we acknowledge the significance of privacy concerns that may hinder access to certain location data. Consequently, we adopt a decentralized approach by avoiding the centralization of all localization tasks on a server. Instead, we seek to foster client collaboration through the implementation of federated learning. Given the diverse geometries present in indoor spaces across various tasks, it becomes imperative to develop a personalized learner for each specific task. Our primary objective is to devise a multi-task learning framework that leverages personalized federated learning to generate meta-knowledge for indoor localization tasks. This framework aims to facilitate rapid adaptation to new system scenarios while minimizing the fingerprint calibration efforts required.

\subsection{System Model: General Architecture}
\begin{figure}[!t]
    \centering
    \includegraphics[width=\linewidth]{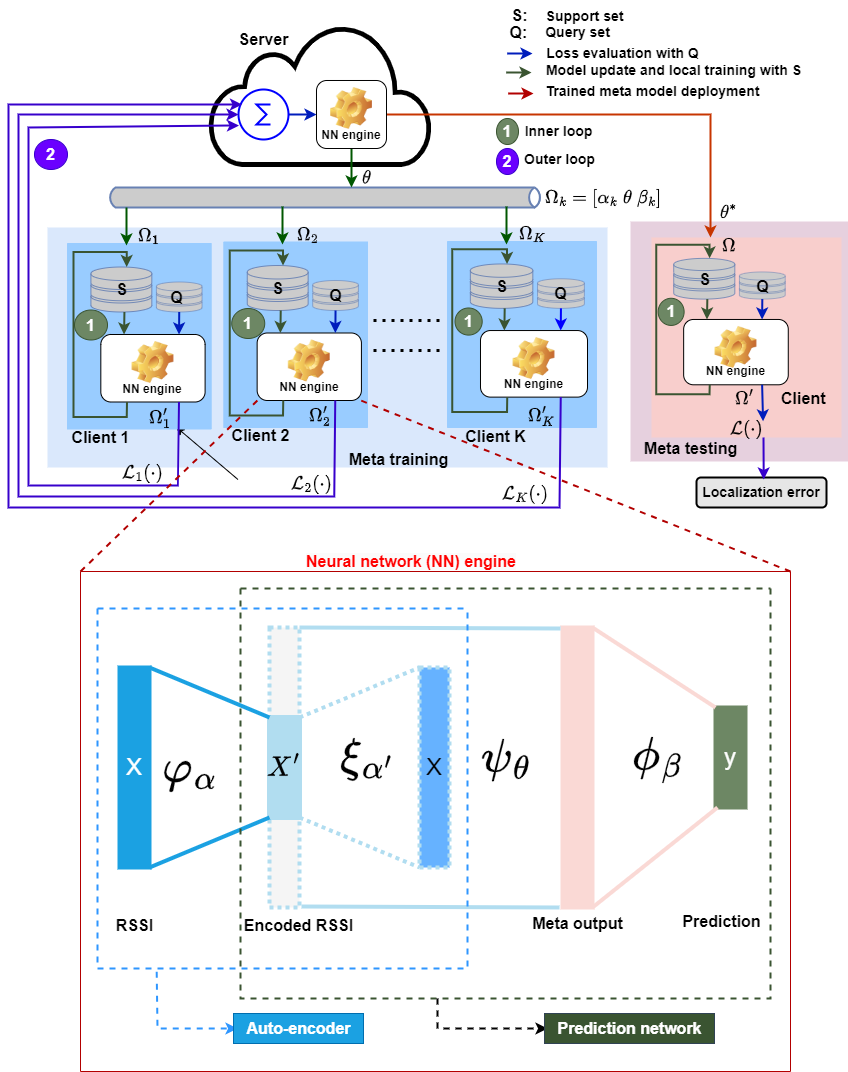}
    \caption{Proposed FeMLoc architecture.}
    \label{fig:femloc}
    \vspace{-0.2in}
\end{figure}
In this section, we elucidate the comprehensive framework depicted in Fig. \ref{fig:femloc}.
The overall framework for the localization task consists of three key components: The auto-encoder, the meta-learner, and the locations mapper. It is important to address the diverse signal spaces present in localization tasks to effectively develop a shared meta-learner. By defining the meta-learning signal space using Eq. (1), two distinct task groups are identified based on their signal space dimensions.

One group exhibits a signal space dimension smaller than that of the meta-learner, while the other group has a higher dimensionality. Instead of handling these groups of learning tasks separately, we propose employing a global solver that automatically adjusts the signal space dimension to match the meta-signal space. To achieve this, we introduce an auto-encoder responsible for extracting input features from each learning task and converting them into a latent representation that aligns with the meta-model's input features space.
This is accomplished by the encoder, which learns the latent representation of the input, and the decoder, which reconstructs the input from the latent representation, ensuring a functional encoding of the input features. Specifically, given the input signal X, the auto-encoder process is as follows:
\begin{equation}
\begin{matrix}
\begin{aligned}
\varphi: \;\mathbb{R}^m &\to \mathbb{R}^d  \\ 
 \; \mathbf X &\mapsto \mathbf X^\prime = \varphi_{\boldsymbol{\alpha} } \left ( \mathbf X \right ) 
\end{aligned} 
\; \text{ ; }\;
\begin{aligned}
\xi: \;\mathbb{R}^d &\to \mathbb{R}^m  \\ 
 \; \mathbf X^\prime &\mapsto \mathbf X = \xi_{\boldsymbol{\alpha}^\prime } \left ( \mathbf X^\prime \right ),
\end{aligned}
\end{matrix}
\end{equation}
where $\mathbf X^\prime$ and $\mathbf X$ are respectively the features in latent space of dimension $d$ and the input features of dimension $m$. $\boldsymbol{\alpha}$ and $\boldsymbol{\alpha}^\prime$ are the weights of the neural networks associated with the encoding and the decoding functions $\varphi(\cdot)$ and $\xi(\cdot)$, respectively.
Note that while the decoder can be utilized for data augmentation purposes, the primary objective of the auto-encoder in this context is feature extraction to adjust dimensionality and incidentally denoise the input signals. It is important to highlight that each client establishes its own auto-encoder, and the only restriction is that the latent representation must align with the meta-input space's dimension. Specifically, the encoder's output, denoted as $\mathbf X^\prime$, is used to provide input to the meta-model, denoted by $\phi$ and parameterized by $\boldsymbol{\theta}$, which aims to learn predicting a structured combination of features $\mathbf Z$ with a dimension of $n$. Subsequently, this predicted combination $\mathbf Z$ is used by the environment-specific locations mapper to predict the target coordinates. The representation of the meta-learner is thus as follows: 
\begin{equation}
    \begin{aligned}
\phi: \;\mathbb{R}^d &\to \mathbb{R}^n  \\ 
 \; \mathbf X^\prime &\mapsto \mathbf Z = \phi_{\boldsymbol{\theta} } \left ( \mathbf X^\prime \right ) =   \phi_{\boldsymbol{\theta} } \left ( \varphi_{\boldsymbol{\alpha}}\left ( \mathbf X \right ) \right ),    
\end{aligned}
\end{equation}

The essence of the framework lies in this part, wherein it undergoes joint training involving all participants. The objective is to enable new clients to utilize this jointly trained model as an initialization step, which provides crucial features already aligned with the specific environment of the client. The training procedure for this model is elaborated in the subsequent subsection.

Following this, as mentioned earlier, due to variations in the geometry of input spaces, each participant is required to develop its individual mapper to associate the features $\mathbf Z$ with the corresponding locations in its environment. The personalized location mapper $\psi$, which is parameterized by $\boldsymbol{\beta}$, is thus formally defined as follows:
\begin{equation}
    \begin{aligned}
\psi: \;\mathbb{R}^n &\to \mathbb{R}^p  \\ 
 \mathbf Z &\mapsto \mathbf y = \psi_{{\boldsymbol{\beta}}}\left ( \mathbf Z \right ) =  \psi_{\boldsymbol{\beta}}\left ( \phi_{\boldsymbol{\theta} } \left ( \varphi_{\boldsymbol{\alpha}}\left ( \mathbf X \right ) \right ) \right ), 
\end{aligned}
\end{equation}
The variable  $\mathbf y$ represents the resultant output corresponding to the p-dimensional coordinates of the target IoT device. When provided with an input $\mathbf X$, the corresponding position label is estimated through the composition of preceding functions and parameterized by aggregating the parameters from different components, which can be expressed as follows:

 \begin{equation}
\begin{aligned}
f: \;\mathbb{R}^m &\to \mathbb{R}^p  \\ 
 \mathbf X &\mapsto \mathbf{\hat{y}} = f_{\boldsymbol{\Omega}}\left ( \mathbf X \right ) =  \left (\psi_{\boldsymbol{\beta}} \circ \phi_{\boldsymbol{\theta}} \circ \varphi_{\boldsymbol{\alpha}} \right )(\mathbf X),
\end{aligned}
 \end{equation}
 with $f$ the global network function parameterized by $\boldsymbol{\Omega}  =  \left [\boldsymbol{\alpha} \;  \boldsymbol{\theta} \; \boldsymbol{\beta}   \right ]$.
Using the loss function $\ell(\cdot)$, the estimation error of a forward pass over data batch $\mathcal B$ can be expressed as follows:
\begin{equation}
    \mathcal{L}\left( f_{\boldsymbol{\Omega}} , \mathcal B\right)= \sum_{(\mathbf X,\mathbf y)\in \mathcal B} \ell \left(f_{\boldsymbol{\Omega}}(\mathbf X), \mathbf y\right).
\end{equation}

\subsection{Data Preprocessing}
\label{sec:preproc}
In this section, we detail the preprocessing steps applied to the RSSI fingerprinting databases used in different localization tasks, aiming to enhance data quality and prepare them for effective model training. In this work, we implement a set of preprocessing steps that address  common challenges in RSSI fingerprinting data. These steps include:
\begin{itemize}
    \item \textbf{APs selection}: We identify and remove APs with entirely missing RSSI values across all measurements, eliminating features with no informative data and reducing fingerprint dimensionality. Optionally, to further reduce dimensionality, we apply a visibility threshold ($\tau$). APs missing values in more than a certain percentage (e.g., $1-\tau$) of measurements are considered to have minimal contribution and are removed.
    \item \textbf{Missing value imputation}:  Missing RSSI values are imputed with a pre-defined value. We commonly replace them with the minimum observed RSSI value in the dataset, potentially adjusted by a small offset (e.g., minimum value - 1 dBm), mitigating the impact of missing data while acknowledging the absence of a signal.
    \item \textbf{Powed transformation}:
    RSSI measurements are typically negative values representing received signal strength. To improve model performance, we transform the data into a positive and normalized representation. To achieve this, we adopt the technique detailed in \cite{powed}, where RSSI measurements are converted to a powered representation defined as follows:
    \begin{equation}
        powed(RSSI_i) = \left(\frac{(RSSI_i - \text{min\_RSSI})}{\text{max\_RSSI} -\text{min\_RSSI}}\right)^\beta
    \end{equation}
    Here, $min_{RSSI}$ and $max_{RSSI}$ represent respectively the minimum and maximum observed RSSI values in the dataset, and $\beta$ is typically set to a constant value like the mathematical constant e. This transformation normalizes the data between 0 and 1 while emphasizing stronger signal strengths.
\end{itemize}
The rationale behind these techniques is to improve data quality and relevance, while opportunistic advanced feature selection is performed through dimensionality reduction in the encoder part of the general model architecture.

Note that the specific choices for missing value imputation and transformation parameters (e.g., offset value, $\beta$) might require adjustments based on the characteristics of the specific dataset for each localization task. 

\section{FeMLoc: Federated meta-learning localization model}
\label{sec:femloc}
In this section, we elucidate the overarching framework of FeMLoc, a meta-learning-based approach for indoor positioning, and outline a multitude of distinct use cases wherein FeMLoc finds application.

\begin{definition}[federated meta-learning problem]\;\\
    \begin{enumerate}
    \vspace{-0.1in}
        \item Let $\mathbb{T}$  represent the potentially infinite set of all possible data elements, such as RSSI-location pairs, where each data element is associated with a localization task $\mathcal{T}(\mathcal{D},\mathcal{L})$ as defined in Def. \ref{def:task}. For any given parameters $\boldsymbol{\Omega}$ in the parameter space $\boldsymbol\Xi$, we denote the loss at data element $\mathcal{T} \in \mathbb{T}$ with parameter $\boldsymbol{\Omega}$ as $\mathcal{L}\left( f_{\boldsymbol{\Omega}} , \mathcal D\right)$.
        \item Consider the set $\mathcal{K}$, which encompasses all possible clients. Each client denoted as $k$, $k \in \mathcal{K}$ is associated with a task distribution $\mathcal{T}$, which is supported on the domain $\boldsymbol\Xi$, such that $$k \mapsto \mathcal{T}_k = \mathcal{T}\left (\mathcal{D}_k, \mathcal{L}_k \right )$$
        \item Additionally, we assume that a meta-distribution denoted as $\mathcal{P}$  is present and operates on the set of clients $\mathcal{K}$. Each client $k$ is associated with a probability distribution $\mathcal{P}_k \in \mathcal{P}$.
    \end{enumerate}
    The objective is to optimize the expected loss at a two-level structure in the following manner:
    Then, our federated meta-learning objective can be defined as follows:
    \begin{align}
        \min_{\boldsymbol{\theta} \in \mathbb{R}^{W\in \boldsymbol\Xi}} J\left(\boldsymbol{\theta}\right):=&\mathbb{E}_{k\sim \mathcal{P}_k}\left [ \mathcal{L}_k\left(f_{\boldsymbol{\Omega}_k^{N_k}}, \mathcal{D}_k^q\right) \right] \label{eq:objective}\\
        \text{subject to: }
        &\boldsymbol{\Omega}_k^{n+1} = \boldsymbol{\Omega}_k^{n} - \mu_k \nabla_{\boldsymbol{\Omega}_k^{n}} \mathcal{L}_k\left(f_{\boldsymbol{\Omega}_k^{n}}, \mathcal{D}_k^s\right), \label{eq:innerloop}\\
        & n=1,2,3,...,N_k \nonumber\\
        &\boldsymbol{\Omega}_k^n = \left[\boldsymbol{\alpha}_k^n\; \boldsymbol{\theta}_k^n \; \boldsymbol{\beta}_k^n \right]\nonumber
\end{align}   
where $J_{\boldsymbol{\theta}}$ is the global loss of the meta learner parameterized by $\boldsymbol{\theta}$ in meta parameter space $W\in\boldsymbol\Xi$. $\mu_k$, $\mathcal{D}_k^s$, and $\mathcal{D}_k^q$ are the inner learning rate, the support, and query sets of client $k$, respectively. 

\end{definition}


The federated meta-objective in \eqref{eq:objective} can be broken down into two-level optimization corresponding to the two levels of training in the FL setting: the inner-loop optimization and the outer-loop optimization
\subsection{Inner loop optimization: FL local training}
In this level, each client $k$  trains the meta-model $\boldsymbol{\theta}$ which is part of its personalized model $\boldsymbol{\Omega}$.
The inner loop optimization is done through iterative weight updates using the stochastic gradient descent (SGD) method \cite{yuan2020federated} as shown in \eqref{eq:innerloop}.

With every client $k$ in the FL loop, we have the following:
\begin{equation}
    \boldsymbol{\Omega}_k^{n,r} =  \left(\boldsymbol{\alpha}_k^{n+r\times N_k}, \boldsymbol{\theta}_k^{n,r}, \boldsymbol{\beta}_k^{n+r\times N_k} \right),
\end{equation}
where $N_k$ is the total number of local updates of client $k$, $r$ is the current FL iteration (communication round), and $\boldsymbol{\theta}_k^{n,r}$ is the current local update (the $n^{th}$) of the meta-model received from the server at the $r^{th}$ communication round.
For the sake of simplicity, let $\boldsymbol{\theta}_k^{0,r} =\boldsymbol{\theta}_k^{r}=\boldsymbol{\theta}^{r}$ and $\boldsymbol{\theta}_k^{N,r} =\boldsymbol{\theta}_k^{N}$.
Additionally, the probability distribution of client $k \sim \mathcal{P}_k \in \mathcal{P}$ can be simplified by
$\rho_k = \frac{\left|\mathcal{D}_k^q\right|}{\sum_{k^\prime} \left|\mathcal{D}_{k^\prime}^q\right|}$ is the contribution factor of the client $k$ to global update of the meta-model.

At each communication round $r$, client $k$ receives the global meta-model $\boldsymbol{\theta}_k^{r}$ from the server and uses it to initialize its local training. In contrast, when initializing the local model at communication round $r$, the encoder $\boldsymbol{\alpha}_k$ and location mapper $\boldsymbol{\beta}_k$ resume their training from the previous FL round.
\subsection{Outer loop optimization: FL global update}
This is where the parameters of the shared meta-model are globally updated. Indeed, after the inner loop optimization, i.e., the local training, each client $k$ evaluates the gradients of the locally updated meta-model $\boldsymbol{\theta}_k^N$ and sends them to the server. That is after $N$ local steps of SGD, client $k$ transmits the gradients with respect to $\boldsymbol{\theta}_k^{N}$ 
to the server. 
Upon reception of the gradients from all the clients, the FL server updates the global meta-model as follows:
\begin{equation}
    \boldsymbol{\theta}^{r+1} = \boldsymbol{\theta}^r - \eta \sum_k \rho_k \nabla_{\boldsymbol{\theta}_k^N } \mathcal{L}_k(f_{\boldsymbol{\Omega}_k^{N,r}}, \mathcal{D}_k^q),
    \label{eq:outerloop}
\end{equation}
where $\eta$ is the outer learning rate.

We denote by $\boldsymbol{\theta}^{*} $ the optimal solution for \eqref{eq:objective}. The goal is to approximate this solution after $R$ communication rounds in the federated setting described in the next section, such that $\boldsymbol{\theta}^{R} \sim \boldsymbol{\theta}^{*} $. 


\subsection{Federated meta-learning algorithm}
Our federated meta-learning framework encompasses two distinct phases: the meta-training phase and the meta-testing phase
\SetKwComment{Comment}{/* }{ */}
\RestyleAlgo{ruled}
\begin{algorithm}
\SetAlgoLined
  \caption{FeMLoc meta-training}
  \label{alg:femloc-train}
  \SetKwInOut{Input}{Inputs}
  \SetKwInOut{Output}{Outputs}
  \Input{$\{\mathcal{D}_k\}$ \Comment*[l]{Clients' RSSI Datasets}}
  \Input{$\{\mathcal{L}_k\}$ \Comment*[l]{Clients' loss functions}}
  
  \Output{$\boldsymbol{\theta}^*:=\boldsymbol{\theta}^R$ \Comment*[l]{Trained meta-model}}
  \SetKwProg{ServerInit}{ServerInit}{}{}
   \ServerInit{()}{
   $ r \gets 0$ , set $\boldsymbol{\theta}^r$, $\boldsymbol{\alpha}_k$, and $\boldsymbol{\beta}_k$\;
  Clients' models' configuration: $\boldsymbol{\Omega}_k \gets \left(\boldsymbol{\alpha}_k, \boldsymbol{\theta}^0, \boldsymbol{\beta}_k\right)$\;
  
  }
  \While{not converged and $r < R$}{
  server broadcasts $\boldsymbol{\theta}^r$\;
  
  \ForEach{Client $k \in \{1,2, \dots,K\}$ }{
  \Comment{In parallel}
  \SetKwProg{ClientLocalTraining}{ClientLocalTraining}{}{}
   \ClientLocalTraining{$(\mathcal{D}_k, \mathcal{L}_k, \boldsymbol{\theta}^r$)}{
    $ n \gets 0$,  $\boldsymbol{\theta}_k^{n,r} \gets \boldsymbol{\theta}^r$\;
    
    Update the local model with: 
    $\boldsymbol{\Omega}_k^{n,r} \gets \left(\boldsymbol{\alpha}_k^{n+r\times N_k} , \boldsymbol{\theta}_k^{n,r} , \boldsymbol{\beta}_k^{n+r\times N_k} \right)$\;
    \For{$n \gets 1$ \KwTo $N$}{
    Train the local model using \eqref{eq:innerloop}\;
    }
    Get trained model:
    $\boldsymbol{\Omega}_k^{N_k,r} = \left(\boldsymbol{\alpha}_k^{N_k+r\times N_k} , \boldsymbol{\theta}_k^{N_k,r} , \boldsymbol{\beta}_k^{N_k+r\times N_k} \right)$\;
     Client uploads $\boldsymbol{\theta}_k^{N_k}=\boldsymbol{\theta}_k^{N_k,r}$\;
  }
  }
  
  \SetKwProg{ServerGlobalUpdate}{ServerGlobalUpdate}{}{}
  \ServerGlobalUpdate{$(\{\boldsymbol{\theta}_k^{N_k}\})$}{
  Update the global meta-model $\boldsymbol{\theta}^r$ using \eqref{eq:outerloop}\;
  }
  $ r \gets r +1$ \;
  
  }
  \end{algorithm}
The optimization problem defined in \eqref{eq:objective}, which can be reduced to solving the problem defined by \eqref{eq:outerloop}, is resolved using a collaborative approach by different clients $\{k\}$, each of whom is assigned a localization task $\mathcal{T}_k$. To do this, the resolution through the combination of FL and meta-learning can be summarized in the following steps, outlined in Algorithm \ref{alg:femloc-train}.

\subsubsection{Meta model initialization}
This step involves randomly initializing a neural network on the server, which is then distributed to all participating clients responsible for localization tasks.
\subsubsection{Local update} At the beginning of each communication round, each client uses a local copy of the meta-model to augment its model architecture, constituting an auto-encoder and a private locations mapper, thereby forming the general model as described in Section \ref{sec:sysm}.
Locally, the meta-training process consists of two phases: (i) Training, where the client's support set is used to train the model with a few gradient steps, and (ii) Query (Evaluation), where the client's query set is used to evaluate the model, and the gradients of the corresponding loss are transmitted to the server.

\subsubsection{Global update}
This step entails the central server's role in the aggregation of gradients contributed by all participating clients, thereby deriving the global gradients. Subsequently, these global gradients are employed to update the global meta-model. The updated global meta-model is then disseminated to all client nodes for use in subsequent training rounds, thereby facilitating an iterative process that continues until convergence is achieved.

Elaborating on the previously outlined federated meta-training procedure, the subsequent phase involves what is referred to as ``meta-testing.'' Within this context, the pre-trained meta-model is deployed into novel environments, constituting new tasks or clients.  Given a new environment, a new localization is assigned to client \mbox{$\kappa\sim\mathcal{T}_\kappa=\mathcal{T}\left(\mathcal{D}_\kappa, \mathcal{L}_\kappa\right)$} whose goal is to exploit the pretrained meta-model to design a new personalized model. To do so, the new client initializes its model with the pretrained weights of the meta-model and derives its task-specific parameters by taking $N_\kappa$ gradient steps ( $N_\kappa-$shot learning), resulting in the following:
\begin{equation}
    \begin{aligned} 
        \boldsymbol{\Omega}_\kappa^{N_\kappa} &=  \boldsymbol{\Omega}_\kappa^{m} - \mu_\kappa \Biggl[\nabla_{\boldsymbol{\Omega}_\kappa^{m}} \mathcal{L}_\kappa\left(f_{\boldsymbol{\Omega}_\kappa^{m}}, \mathcal{D}_\kappa^s\right)+ \\
         &\sum_{i=1}^{N_\kappa-1} \nabla_{\boldsymbol{\Omega}_\kappa^{i}} \mathcal{L}_\kappa\left(f_{\boldsymbol{\Omega}_\kappa^{i}}, \mathcal{D}_\kappa^s\right) \Biggl] 
    \end{aligned},
    \label{eq:fewshot}
\end{equation}
with \mbox{$\boldsymbol{\Omega}_\kappa^{m}  = \left ( \boldsymbol{\alpha}_\kappa^{0}, \boldsymbol{\theta}^{*}, \boldsymbol{\beta}_\kappa^{0} \right )$}, and the optimal task-specific model is obtained through the general optimization problem below.
\begin{equation}
    \boldsymbol{\Omega}_\kappa^{*}  = \argmin_{\boldsymbol{\Omega}_\kappa}\mathcal{L}_\kappa\left(f_{\boldsymbol{\Omega}_\kappa}, \mathcal{D}_\kappa^s\right),
    \label{eq:opti}
\end{equation}
and the final evaluation loss for each test client $\kappa$ is given by
\begin{equation}
    \mathcal{L}_{\kappa}^* = \mathbb{E}_{(\boldsymbol X, \boldsymbol y) \sim \mathcal{D}_\kappa^q}\ell \left(f_{\boldsymbol{\Omega}_\kappa^*}(\mathbf X), \mathbf y\right).
\end{equation}
The primary motivation underpinning our framework is the expectation that, through this approach, $ \boldsymbol{\Omega}_\kappa^{N_\kappa}$ can be a good approximation of $\boldsymbol{\Omega}_\kappa^{*} $ i.e., \mbox{$ \boldsymbol{\Omega}_\kappa^{N_\kappa} \sim \boldsymbol{\Omega}_\kappa^{*} $} with $N_\kappa$ as small as possible, demonstrating an expedited convergence to satisfactory performance. This deployment of the meta-model into a new environment and the computation of $\boldsymbol{\Omega}_\kappa^{N_\kappa}$ is elucidated in Algorithm \ref{alg:femloc-test}, which serves as the foundational framework guiding this meta-testing process.
\SetKwComment{Comment}{/* }{ */}
\RestyleAlgo{ruled}
\begin{algorithm}
\SetAlgoLined
  \caption{FeMLoc meta-testing}
  \label{alg:femloc-test}
  \SetKwInOut{Input}{Inputs}
  \SetKwInOut{Output}{Outputs}
  \Input{$\{\mathcal{D}_\kappa\}$ \Comment*[l]{Test Clients' RSSI Datasets}}
  \Input{$\{\mathcal{L}_\kappa\}$ \Comment*[l]{Test Clients' loss functions}}
  \Input{$\boldsymbol{\theta}^*$ \Comment*[l]{Global meta-model downloaded from the server}}
  
  \Output{$\{\boldsymbol{\Omega}_\kappa^*:=\boldsymbol{\Omega}_\kappa^{N_\kappa}\}$ \Comment*[l]{Trained model}}

  \ForEach{Client $\kappa \in$ TestClients }{
  
  \SetKwProg{ClientInit}{ClientInit}{}{}
   \ClientInit{()}{
   $ n \gets 0$,  $\boldsymbol{\theta}_\kappa^{n} \gets \boldsymbol{\theta}^*$, set $\boldsymbol{\alpha}_\kappa^{n}$ and $\boldsymbol{\beta}_\kappa^{n}$\;
   Client model's configuration: 
    $\boldsymbol{\Omega}_\kappa^{n} \gets \left(\boldsymbol{\alpha}_\kappa^{n} , \boldsymbol{\theta}_\kappa^{n} , \boldsymbol{\beta}_\kappa^{n} \right)$\;
  }
  
  \SetKwProg{ClientTraining}{ClientTraining}{}{}
   \ClientTraining{$(\mathcal{D}_\kappa, \mathcal{L}_\kappa, \boldsymbol{\theta}^*$)}{
    \For{$n \gets 1$ \KwTo $N_\kappa$}{
    Train the client model using \eqref{eq:innerloop}\;
    }
    Get trained model $\boldsymbol{\Omega}_\kappa^{N_k} = \left(\boldsymbol{\alpha}_\kappa^{N_\kappa} , \boldsymbol{\theta}_\kappa^{N_\kappa} , \boldsymbol{\beta}_\kappa^{N_\kappa} \right)$\;
  }
  }
  \end{algorithm}

\subsection{Theoretical convergence and transferability analysis}
In the following section, we delve into a critical aspect of our study. This phase involves an exploration of the underlying theoretical foundations that govern the convergence properties of our federated meta-learning framework. Moreover, we investigate the transferability of knowledge acquired by the pre-trained meta-model when applied to entirely new and uncharted domains. This examination is fundamental in establishing the robustness and adaptability of our approach, shedding light on its potential implications for a wide array of real-world applications.
\begin{definition}
  The meta-model $\boldsymbol{\theta}^{r}$ at the $r^{th}$ communication round is said to be an $\varepsilon-$accurate solution for the federated meta-learning problem reduced to the optimization problem in \eqref{eq:outerloop} if the following criterion is fulfilled.
    \begin{equation*}
       \begin{aligned}
           &\mathbb{E}_k\bigg[\left \| \nabla_{\boldsymbol{\theta}^{r}} \mathcal{L}_k\left(f_{\boldsymbol{\Omega}_k^{N_k,r}}, \mathcal{D}_k^q\right) \right \|^2\bigg]< \varepsilon, & \\
         &\boldsymbol{\Omega}_k^{N_k,r} = \left(\boldsymbol{\alpha}_k^{N_k(r+1)}, \boldsymbol{\theta}^{r}, \boldsymbol{\beta}_k^{N_k(r+1)} \right).
       \end{aligned}
    \end{equation*}
    \label{def:accuracy}
\end{definition}
Furthermore, we make the following assumptions.
\begin{assumption}[Smoothness]\;\\
    a) For each client $k$ with task \mbox{$\mathcal{T}_k = \mathcal{T}\left (\mathcal{D}_k, \mathcal{L}_k \right )$}, the gradients of the loss function $\mathcal{L}_k \left (\cdot,\cdot\right )$ are assumed to be \mbox{$\delta_1$-smooth}, so that for any \mbox{$\boldsymbol{\Omega}_k,\boldsymbol{\Omega}_k^{\prime} \in \boldsymbol\Xi$} we have
    \begin{equation*}
    \begin{aligned}
        &\left \| \nabla_{\boldsymbol{\Omega}_k} \mathcal{L}_k\left(f_{\boldsymbol{\Omega}_k^\prime}, \mathcal{D}_k^\sigma \right) - \nabla_{\boldsymbol{\Omega}_k} \mathcal{L}_k\left(f_{\boldsymbol{\Omega}_k^{\prime}}, \mathcal{D}_k^\sigma \right) \right \| \leq \delta_1  \left \| \boldsymbol{\Omega}_k-\boldsymbol{\Omega}_k^{\prime} \right \|\\
        &\sigma \in \left \{ s,q \right \}. 
    \end{aligned}
    \end{equation*}
    b) Similarly, we assume that the second derivatives of the gradients of the loss function $\mathcal{L}_k \left (\cdot,\cdot\right )$ are \mbox{$\delta_2$-smooth}, i.e.,
    \begin{equation*}
    \begin{aligned}
        &\left \| \nabla_{\boldsymbol{\Omega}_k}^2 \mathcal{L}_k\left(f_{\boldsymbol{\Omega}_k^\prime}, \mathcal{D}_k^\sigma \right) - \nabla_{\boldsymbol{\Omega}_k}^2 \mathcal{L}_k\left(f_{\boldsymbol{\Omega}_k^{\prime}}, \mathcal{D}_k^\sigma \right) \right \| \leq \delta_2  \left \| \boldsymbol{\Omega}_k-\boldsymbol{\Omega}_k^{\prime} \right \|\\
        &\sigma \in \left \{ s,q \right \}. 
    \end{aligned}
    \end{equation*}
    
    \label{ass:smoothness}
\end{assumption}
\begin{assumption}[Lipschitz continuity]
    The  loss function $\mathcal{L}_k \left (\cdot,\cdot\right )$ For each client $k$ is  assumed to be $\gamma-$Lipschitz, i.e., for any \mbox{$\boldsymbol{\Omega}_k,\boldsymbol{\Omega}_k^{\prime} \in \boldsymbol\Xi$},
    \begin{equation*}
    \begin{aligned}
        &\left |  \mathcal{L}_k\left(f_{\boldsymbol{\Omega}_k^\prime}, \mathcal{D}_k^\sigma \right) -  \mathcal{L}_k\left(f_{\boldsymbol{\Omega}_k^{\prime}}, \mathcal{D}_k^\sigma \right) \right | \leq \gamma  \left \| \boldsymbol{\Omega}_k-\boldsymbol{\Omega}_k^{\prime} \right \|\\
        &\sigma \in \left \{ s,q \right \}. 
    \end{aligned}
    \end{equation*}
    \label{ass:lipsz}
\end{assumption}
\begin{assumption}
Given a localization task held by a client k, let $\ss _k^r$ be a data batch randomly sampled from the local dataset $\mathcal{D}_k$ at communication round $r$, we assume the following for the expected squared norm:
 \begin{equation*}
    \mathbb{E}_{\ss_k^r\sim\mathcal{D}_k}\left[ \left \| \boldsymbol\nabla_{\boldsymbol\theta}\mathcal{L}\left ( \boldsymbol\Omega_k, \ss_k^r\right ) \right \|^2\right] \leq \zeta^2, \; \forall k.
\end{equation*}
\label{ass:vbound}
\end{assumption}

\begin{lemma}[Inner-loop optimization]
    The $n^{th}$ gradient update during the inner loop optimization for a client k can be expressed, with respect to the initial gradients by:
    \begin{equation}
         \boldsymbol{\Omega}_k^{n} = \boldsymbol{\Omega}_k^{0} - \mu_k n\nabla_{\boldsymbol{\Omega}_k^{0}} \mathcal{L}_k\left(f_{\boldsymbol{\Omega}_k^{0}}, \mathcal{D}_k^s\right), \label{eq:nupdt}
    \end{equation}
    \begin{equation} 
        \text{that is: }\left\{\begin{matrix}
         \boldsymbol{\alpha}_k^{n} & = & \boldsymbol{\alpha}_k^{0} - \mu_k n\nabla_{\boldsymbol{\alpha}_k^{0}} \mathcal{L}_k\left(f_{\boldsymbol{\Omega}_k^{0}}, \mathcal{D}_k^s\right) \\
         \boldsymbol{\theta}_k^{n} & = & \boldsymbol{\theta}_k^{0} - \mu_k n\nabla_{\boldsymbol{\theta}_k^{0}} \mathcal{L}_k\left(f_{\boldsymbol{\Omega}_k^{0}}, \mathcal{D}_k^s\right) \\ 
         \boldsymbol{\beta}_k^{n} & = & \boldsymbol{\beta}_k^{0} - \mu_k n\nabla_{\boldsymbol{\beta}_k^{0}} \mathcal{L}_k\left(f_{\boldsymbol{\Omega}_k^{0}}, \mathcal{D}_k^s\right)
        \end{matrix}\right.
    \end{equation}
    \label{lem:inner}
\end{lemma}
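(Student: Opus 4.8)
The statement in Lemma~\ref{lem:inner} claims that after $n$ SGD steps in the inner loop, the parameters equal the initial parameters minus $\mu_k n$ times the \emph{initial} gradient. Taken literally, this only holds if the gradient $\nabla_{\boldsymbol{\Omega}_k}\mathcal{L}_k(f_{\boldsymbol{\Omega}_k},\mathcal{D}_k^s)$ is constant along the trajectory, i.e. the loss is (locally) affine in $\boldsymbol{\Omega}_k$, or equivalently one treats the update as a linearization around $\boldsymbol{\Omega}_k^0$. So the plan is to proceed by induction on $n$ starting from the exact recursion \eqref{eq:innerloop}, namely $\boldsymbol{\Omega}_k^{n+1} = \boldsymbol{\Omega}_k^{n} - \mu_k \nabla_{\boldsymbol{\Omega}_k^{n}}\mathcal{L}_k(f_{\boldsymbol{\Omega}_k^{n}},\mathcal{D}_k^s)$, and to make explicit the approximation that the per-step gradient is evaluated at (or replaced by its value at) the round's initial iterate $\boldsymbol{\Omega}_k^{0}=\boldsymbol{\theta}^r$-based configuration.

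First I would set up the base case: for $n=1$, \eqref{eq:nupdt} is exactly \eqref{eq:innerloop} with $n=0$, so it holds trivially. Then, assuming $\boldsymbol{\Omega}_k^{n} = \boldsymbol{\Omega}_k^{0} - \mu_k n \nabla_{\boldsymbol{\Omega}_k^{0}}\mathcal{L}_k(f_{\boldsymbol{\Omega}_k^{0}},\mathcal{D}_k^s)$, I would plug this into \eqref{eq:innerloop} to get $\boldsymbol{\Omega}_k^{n+1} = \boldsymbol{\Omega}_k^{n} - \mu_k \nabla_{\boldsymbol{\Omega}_k^{n}}\mathcal{L}_k(f_{\boldsymbol{\Omega}_k^{n}},\mathcal{D}_k^s)$ and invoke the key reduction step: since the inner loop performs only a small number $N$ of gradient steps with a small step size, one approximates $\nabla_{\boldsymbol{\Omega}_k^{n}}\mathcal{L}_k(f_{\boldsymbol{\Omega}_k^{n}},\cdot) \approx \nabla_{\boldsymbol{\Omega}_k^{0}}\mathcal{L}_k(f_{\boldsymbol{\Omega}_k^{0}},\cdot)$. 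This is quantitatively justified by Assumption~\ref{ass:smoothness}(a): $\|\nabla_{\boldsymbol{\Omega}_k^{n}}\mathcal{L}_k - \nabla_{\boldsymbol{\Omega}_k^{0}}\mathcal{L}_k\| \le \delta_1 \|\boldsymbol{\Omega}_k^{n}-\boldsymbol{\Omega}_k^{0}\| = \delta_1 \mu_k n \|\nabla_{\boldsymbol{\Omega}_k^{0}}\mathcal{L}_k\|$, which is $O(\mu_k n)$ and hence negligible for small $\mu_k N$. Substituting the initial gradient then yields $\boldsymbol{\Omega}_k^{n+1} = \boldsymbol{\Omega}_k^{0} - \mu_k (n+1)\nabla_{\boldsymbol{\Omega}_k^{0}}\mathcal{L}_k(f_{\boldsymbol{\Omega}_k^{0}},\mathcal{D}_k^s)$, closing the induction. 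The componentwise statement for $\boldsymbol{\alpha}_k^n$, $\boldsymbol{\theta}_k^n$, $\boldsymbol{\beta}_k^n$ then follows by simply reading off the block structure $\boldsymbol{\Omega}_k = [\boldsymbol{\alpha}_k\ \boldsymbol{\theta}_k\ \boldsymbol{\beta}_k]$ and noting the gradient decomposes blockwise.

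The main obstacle — and the place I would be most careful about in the write-up — is that the claimed identity is not exact: it is a first-order (linearization) approximation whose error accumulates as $\sum_{j=1}^{n-1} \mu_k(\nabla_{\boldsymbol{\Omega}_k^{j}}\mathcal{L}_k - \nabla_{\boldsymbol{\Omega}_k^{0}}\mathcal{L}_k)$, bounded via $\delta_1$-smoothness by roughly $\delta_1 \mu_k^2 \zeta^2 \cdot O(n^2)$ using Assumption~\ref{ass:vbound}. I would therefore either (i) state the lemma as an approximate identity valid to first order in the (small) inner step size, carrying an explicit $O(\delta_1 \mu_k^2 N^2 \zeta^2)$ remainder, or (ii) if the paper genuinely intends exact equality, note that this presupposes the inner objective is affine in $\boldsymbol{\Omega}_k$ over the relevant neighborhood — an idealization used to keep the subsequent convergence analysis tractable. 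Either way, the proof itself is the short induction above; the substantive content is isolating and justifying the constant-gradient reduction, after which the block decomposition and the telescoping of $n$ identical increments are purely mechanical.
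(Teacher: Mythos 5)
Your proof is correct and its skeleton matches the paper's: both unroll the recursion \eqref{eq:innerloop} into the accumulated-gradient sum $\boldsymbol{\Omega}_k^{n} = \boldsymbol{\Omega}_k^{0} - \mu_k \sum_{i} \nabla_{\boldsymbol{\Omega}_k^{i}} \mathcal{L}_k$ (your induction is just the telescoped form of the paper's explicit unrolling for $n=1,2$), and both then collapse the sum to $n$ copies of the initial gradient by arguing the gradient is effectively constant along the short inner trajectory. Where you genuinely diverge is in how that constant-gradient step is justified. The paper invokes neural tangent kernel theory: for $\mu_k \ll 1$ and in the infinite-width regime, the loss is locally linear around $\boldsymbol{\Omega}_k^{0}$, so all per-step gradients coincide with the initial one. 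You instead derive the same reduction from Assumption~\ref{ass:smoothness}(a), bounding the gradient drift by $\delta_1 \mu_k n \|\nabla_{\boldsymbol{\Omega}_k^{0}}\mathcal{L}_k\|$ and (via Assumption~\ref{ass:vbound}) accumulating an explicit $O(\delta_1 \mu_k^2 N^2 \zeta^2)$ remainder. Your route is more self-contained — it relies only on hypotheses the paper has already stated rather than an external infinite-width idealization — and it is more honest about the fact that \eqref{eq:nupdt} is an approximate identity rather than an exact one, which is a point the paper glosses over by writing it with an equals sign. The paper's NTK framing, on the other hand, buys a cleaner narrative (exact linearity in the lazy-training limit) at the cost of an assumption (infinite width) that is never formalized among its stated hypotheses. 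Your suggestion to carry the explicit remainder, or to flag the affine-loss idealization, would strengthen the lemma as stated; neither changes the downstream use of the result.
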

\begin{proof}
    By \eqref{eq:innerloop}  we have the following:
    \begin{equation}
        \begin{matrix}
 \boldsymbol{\Omega}_k^{1} = & \boldsymbol{\Omega}_k^{0} - \mu_k \nabla_{\boldsymbol{\Omega}_k^{0}} \mathcal{L}_k\left(f_{\boldsymbol{\Omega}_k^{0}}, \mathcal{D}_k^s\right)\\ 

 \boldsymbol{\Omega}_k^{2} = & \boldsymbol{\Omega}_k^{1} - \mu_k \nabla_{\boldsymbol{\Omega}_k^{1}} \mathcal{L}_k\left(f_{\boldsymbol{\Omega}_k^{1}}, \mathcal{D}_k^s\right)\\ 

\boldsymbol{\Omega}_k^{2} = & \boldsymbol{\Omega}_k^{0} - \mu_k\left ( \nabla_{\boldsymbol{\Omega}_k^{0}} \mathcal{L}_k\left(f_{\boldsymbol{\Omega}_k^{0}}, \mathcal{D}_k^s\right)+ \nabla_{\boldsymbol{\Omega}_k^{1}} \mathcal{L}_k\left(f_{\boldsymbol{\Omega}_k^{1}}, \mathcal{D}_k^s\right)\right ) 
\end{matrix}
    \end{equation}
    Subsequently, the $n^{th}$ gradient update can be obtained via the accumulated gradient sum expressed as follows:
    \begin{equation}
         \boldsymbol{\Omega}_k^{n} =  \boldsymbol{\Omega}_k^{0} - \mu_k \sum_{i=1}^{n-1} \nabla_{\boldsymbol{\Omega}_k^{i}} \mathcal{L}_k\left(f_{\boldsymbol{\Omega}_k^{i}}, \mathcal{D}_k^s\right). \label{eq:gsum}
    \end{equation}
  Then, For $\mu_k \ll1,\;  k=1,2,\cdots, K$, in the infinite width regime, and following the neural tangent kernel (NTK) theory \cite{ntk}, the behavior of $\boldsymbol{\Omega}_k$ can be locally approximated as a linear function during training. Indeed, under these settings, The NTK analysis established that in the vicinity of $\boldsymbol{\Omega}_k^{0}$, $\mathcal{L}_k\left(f_{\boldsymbol{\Omega}_k}, \mathcal{D}_k^s\right)$ can be reasonably approximated by a linear function. This approximation is particularly accurate at the early stages of training when the weights are close to their initialization.
  Thus, the linearization of $\mathcal{L}_k\left(f_{\boldsymbol{\Omega}_k}, \mathcal{D}_k^s\right)$ around $\boldsymbol{\Omega}_k^{0}$ in \eqref{eq:gsum} holds the result in \eqref{eq:nupdt}.
\end{proof}

\begin{lemma}[Outer-loop optimization]
    The $r^{th}$ gradient update during the outer loop optimization for the federated server can be expressed, with respect to the initial gradients by:
    \begin{equation}
        \boldsymbol{\theta}^{r} = \boldsymbol{\theta}^0 - \eta r\mathbb{E}_k\bigg[   \nabla_{\boldsymbol{\theta}_k^{N,t} } \mathcal{L}_k(f_{\boldsymbol{\Omega}_k^{N,t}}, \mathcal{D}_k^q)\bigg].
        \label{eq:gsum2}
    \end{equation}
    \label{lem:outer}
\end{lemma}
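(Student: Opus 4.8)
\textbf{Proof proposal for Lemma~\ref{lem:outer} (Outer-loop optimization).}
The plan is to mimic exactly the telescoping/linearization argument used in the proof of Lemma~\ref{lem:inner}, but now at the level of the server's communication rounds rather than a single client's inner steps. First I would write out the one-step outer update \eqref{eq:outerloop} for the first few rounds: $\boldsymbol{\theta}^{1} = \boldsymbol{\theta}^0 - \eta \sum_k \rho_k \nabla_{\boldsymbol{\theta}_k^{N}} \mathcal{L}_k(f_{\boldsymbol{\Omega}_k^{N,0}}, \mathcal{D}_k^q)$, then $\boldsymbol{\theta}^{2} = \boldsymbol{\theta}^{1} - \eta \sum_k \rho_k \nabla_{\boldsymbol{\theta}_k^{N}} \mathcal{L}_k(f_{\boldsymbol{\Omega}_k^{N,1}}, \mathcal{D}_k^q)$, and so on, so that by induction the $r$th iterate is the initial point minus the accumulated sum $\boldsymbol{\theta}^{r} = \boldsymbol{\theta}^0 - \eta \sum_{t=0}^{r-1} \sum_k \rho_k \nabla_{\boldsymbol{\theta}_k^{N,t}} \mathcal{L}_k(f_{\boldsymbol{\Omega}_k^{N,t}}, \mathcal{D}_k^q)$. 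Recognizing that $\sum_k \rho_k (\cdot)$ is precisely the empirical expectation $\mathbb{E}_k[\cdot]$ under the client contribution weights $\rho_k = |\mathcal{D}_k^q| / \sum_{k'} |\mathcal{D}_{k'}^q|$, this rewrites as $\boldsymbol{\theta}^{r} = \boldsymbol{\theta}^0 - \eta \sum_{t=0}^{r-1} \mathbb{E}_k[\nabla_{\boldsymbol{\theta}_k^{N,t}} \mathcal{L}_k(f_{\boldsymbol{\Omega}_k^{N,t}}, \mathcal{D}_k^q)]$.

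Next I would invoke the same NTK / infinite-width linearization regime assumed in Lemma~\ref{lem:inner}: for $\eta \ll 1$ and small per-round drift, the meta-loss gradient evaluated at the locally-adapted parameters is, to leading order, constant across communication rounds, i.e.\ $\nabla_{\boldsymbol{\theta}_k^{N,t}} \mathcal{L}_k(f_{\boldsymbol{\Omega}_k^{N,t}}, \mathcal{D}_k^q) \approx \nabla_{\boldsymbol{\theta}_k^{N,0}} \mathcal{L}_k(f_{\boldsymbol{\Omega}_k^{N,0}}, \mathcal{D}_k^q)$ near initialization (this is the analogue of collapsing $\sum_{i=1}^{n-1}\nabla_{\boldsymbol{\Omega}_k^i}(\cdot)$ to $n\nabla_{\boldsymbol{\Omega}_k^0}(\cdot)$ in \eqref{eq:nupdt}). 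Pulling the common term out of the sum over $t$ collapses $\sum_{t=0}^{r-1}$ into a factor $r$, yielding $\boldsymbol{\theta}^{r} = \boldsymbol{\theta}^0 - \eta r\, \mathbb{E}_k[\nabla_{\boldsymbol{\theta}_k^{N,t}} \mathcal{L}_k(f_{\boldsymbol{\Omega}_k^{N,t}}, \mathcal{D}_k^q)]$, which is exactly \eqref{eq:gsum2}.

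The main obstacle I anticipate is justifying the round-independence of the adapted-parameter gradient with the right notion of ``initialization.'' Unlike the inner loop, where all $n$ steps genuinely start from the same $\boldsymbol{\Omega}_k^0$, here each round $t$ re-runs $N_k$ inner steps from the \emph{current} broadcast $\boldsymbol{\theta}^t$, while the client's encoder $\boldsymbol{\alpha}_k$ and mapper $\boldsymbol{\beta}_k$ keep evolving (as noted after \eqref{eq:outerloop} and encoded in the $r\times N_k$ superscripts). I would handle this by appealing to Assumption~\ref{ass:smoothness} ($\delta_1$-smoothness of the loss gradients) together with the bounded-gradient Assumption~\ref{ass:vbound} to argue the parameter drift $\|\boldsymbol{\Omega}_k^{N_k,t} - \boldsymbol{\Omega}_k^{N_k,0}\|$ stays $O(\eta r \zeta)$, so the induced change in the meta-gradient is $O(\delta_1 \eta r \zeta)$ and is absorbed into the same linear-approximation error already tolerated in Lemma~\ref{lem:inner}; I would state this explicitly as the regime of validity (early rounds, $\eta \ll 1$) rather than claim exact equality. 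A minor secondary point is the notational slip in the statement (a free index $t$ on the right-hand side that is really averaged/fixed at the initialization round); I would simply fix $t=0$ or, consistently with the lemma's phrasing ``with respect to the initial gradients,'' interpret the bracketed gradient as the round-$0$ meta-gradient.
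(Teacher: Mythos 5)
Your proposal is correct and matches the paper's intent exactly: the paper's entire proof of this lemma is the single sentence ``This lemma trivially extends Lemma~\ref{lem:inner},'' i.e., precisely the telescoping-plus-NTK-linearization argument you spell out, with $\sum_k \rho_k(\cdot)$ read as $\mathbb{E}_k[\cdot]$. Your discussion of the cross-round parameter drift (and of the stray index $t$ on the right-hand side of \eqref{eq:gsum2}) is in fact more careful than anything the paper provides for this step.
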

\begin{proof}
    This lemma trivially extends Lemma \ref{lem:inner}.
\end{proof}
Let $\boldsymbol\theta^R = \boldsymbol\theta^m =$ be the pretrained meta-model return by Algorithm \ref{alg:femloc-train} after R communication rounds.
In the test environments, for any new localization tasks $\mathcal{T}_\kappa$, $\boldsymbol\theta^m$ is used as an initialization to train a personalized model $\boldsymbol\Omega_\kappa$ using Algorithm \ref{alg:femloc-test}.
Then, the performance of the FeMloc relies on how well $\boldsymbol\theta$ initialized the new task for accelerated convergence with few-shot learning. We thus consider two cases: (i) RI- standalone training of the new localization task with random initialization of the model weights $\boldsymbol\Omega^0 = \left[\boldsymbol\alpha^0,\boldsymbol\theta^0,\boldsymbol\beta^0\right]$, (ii) MI - training of the new localization task with initialization with the pretrained meta-model $\boldsymbol\Omega^m = \left[\boldsymbol\alpha^0,\boldsymbol\theta^m,\boldsymbol\beta^0\right]$.
\begin{proposition}
    Let $\kappa$ be a client assigned a new localization task in a test environment with initial model $\Omega^i$. The number of SGD steps required to achieve an $\varepsilon-$accurate solution $\boldsymbol\Omega_k^\varepsilon$ satisfies:
    \begin{equation*}
        (N_\kappa^i)^2 <  \frac{1}{\left ( \delta_1\mu   \right )^2}\left ( \frac{\varepsilon -2G^i}{\left \| \nabla_{\boldsymbol\Omega_k^i}\mathcal L_k^i \right \|^2}+1 \right )
    \end{equation*}
    with
    \begin{equation*}
         G^i = \left \|\nabla_{\boldsymbol\Omega_k^i}\mathcal L_k^i \right \|\cdot \left \|\nabla_{\boldsymbol\Omega_k^\varepsilon}\mathcal L_k^\varepsilon \right \|
    \end{equation*}
    \label{prop:steps}
\end{proposition}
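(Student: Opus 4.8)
The plan is to obtain the bound as a direct consequence of the linearized inner-loop recursion of Lemma \ref{lem:inner} combined with the $\delta_1$-smoothness of the loss (Assumption \ref{ass:smoothness}(a)), closing with the $\varepsilon$-accuracy criterion of Definition \ref{def:accuracy} transposed to the meta-testing setting. Concretely, for the new client I write $\boldsymbol{\Omega}_k^i$ for the chosen initialization $\boldsymbol{\Omega}^i\in\{\boldsymbol{\Omega}^0,\boldsymbol{\Omega}^m\}$, set $\boldsymbol{\Omega}_k^\varepsilon:=\boldsymbol{\Omega}_k^{N_\kappa^i}$ for the iterate reached after $N_\kappa^i$ SGD steps on the support set $\mathcal{D}_k^s$, and abbreviate $\mathcal{L}_k^i:=\mathcal{L}_k(f_{\boldsymbol{\Omega}_k^i},\mathcal{D}_k^s)$, $\mathcal{L}_k^\varepsilon:=\mathcal{L}_k(f_{\boldsymbol{\Omega}_k^\varepsilon},\mathcal{D}_k^s)$ and $\mu:=\mu_\kappa$. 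Since a single test task is involved, the accuracy criterion of Definition \ref{def:accuracy} reads simply $\|\nabla_{\boldsymbol{\Omega}_k^\varepsilon}\mathcal{L}_k^\varepsilon\|^2<\varepsilon$ on the full parameter vector.

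First I would invoke Lemma \ref{lem:inner} with $n=N_\kappa^i$ and $\boldsymbol{\Omega}_k^0=\boldsymbol{\Omega}_k^i$ to get $\boldsymbol{\Omega}_k^\varepsilon=\boldsymbol{\Omega}_k^i-\mu N_\kappa^i\nabla_{\boldsymbol{\Omega}_k^i}\mathcal{L}_k^i$, hence $\|\boldsymbol{\Omega}_k^\varepsilon-\boldsymbol{\Omega}_k^i\|=\mu N_\kappa^i\|\nabla_{\boldsymbol{\Omega}_k^i}\mathcal{L}_k^i\|$. Feeding this displacement into the $\delta_1$-smoothness inequality (Assumption \ref{ass:smoothness}(a), with $\sigma=s$) yields $\|\nabla_{\boldsymbol{\Omega}_k^i}\mathcal{L}_k^i-\nabla_{\boldsymbol{\Omega}_k^\varepsilon}\mathcal{L}_k^\varepsilon\|\le\delta_1\mu N_\kappa^i\|\nabla_{\boldsymbol{\Omega}_k^i}\mathcal{L}_k^i\|$. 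Squaring, expanding the left side as $\|\nabla_{\boldsymbol{\Omega}_k^i}\mathcal{L}_k^i\|^2-2\langle\nabla_{\boldsymbol{\Omega}_k^i}\mathcal{L}_k^i,\nabla_{\boldsymbol{\Omega}_k^\varepsilon}\mathcal{L}_k^\varepsilon\rangle+\|\nabla_{\boldsymbol{\Omega}_k^\varepsilon}\mathcal{L}_k^\varepsilon\|^2$, and bounding the inner product from above by $G^i=\|\nabla_{\boldsymbol{\Omega}_k^i}\mathcal{L}_k^i\|\,\|\nabla_{\boldsymbol{\Omega}_k^\varepsilon}\mathcal{L}_k^\varepsilon\|$ via Cauchy--Schwarz, I isolate $\|\nabla_{\boldsymbol{\Omega}_k^\varepsilon}\mathcal{L}_k^\varepsilon\|^2\le(\delta_1\mu N_\kappa^i)^2\|\nabla_{\boldsymbol{\Omega}_k^i}\mathcal{L}_k^i\|^2-\|\nabla_{\boldsymbol{\Omega}_k^i}\mathcal{L}_k^i\|^2+2G^i$. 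Finally I impose that this right-hand side be strictly below $\varepsilon$ — a sufficient condition for $\boldsymbol{\Omega}_k^\varepsilon$ to be $\varepsilon$-accurate — and divide through by $(\delta_1\mu)^2\|\nabla_{\boldsymbol{\Omega}_k^i}\mathcal{L}_k^i\|^2$, which after collecting the $+1$ produces exactly $(N_\kappa^i)^2<\frac{1}{(\delta_1\mu)^2}\big(\frac{\varepsilon-2G^i}{\|\nabla_{\boldsymbol{\Omega}_k^i}\mathcal{L}_k^i\|^2}+1\big)$, as claimed.

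The delicate points here are bookkeeping rather than technical depth. First, the estimate is a \emph{sufficient} condition: any $N_\kappa^i$ obeying it guarantees $\varepsilon$-accuracy, which is the natural reading of ``number of steps required''; I would state this explicitly so the quantifier structure is unambiguous. Second, $G^i$ is defined through the terminal iterate $\boldsymbol{\Omega}_k^\varepsilon$, so the inequality is implicit in $N_\kappa^i$; I would remark that $G^i$ is controlled near stationarity and that, if one prefers an initialization-free form, the $\gamma$-Lipschitz Assumption \ref{ass:lipsz} supplies the relaxations $\|\nabla_{\boldsymbol{\Omega}_k^i}\mathcal{L}_k^i\|\le\gamma$ and $G^i\le\gamma^2$. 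Third, Step 1 inherits the NTK / infinite-width, $\mu\ll1$ regime under which Lemma \ref{lem:inner} was derived, so the bound is to be understood in that regime (Assumption \ref{ass:vbound} is not needed). The RI-versus-MI comparison announced just before the proposition then follows immediately by reading off how the initial gradient norm $\|\nabla_{\boldsymbol{\Omega}_k^i}\mathcal{L}_k^i\|$ and $G^i$ differ between $\boldsymbol{\Omega}^0$ and $\boldsymbol{\Omega}^m$.
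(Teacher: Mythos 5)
Your proof is correct and follows essentially the same route as the paper's: the $\delta_1$-smoothness bound on the gradient difference (your Cauchy--Schwarz expansion is algebraically equivalent to the paper's squared reverse triangle inequality), the linearized displacement $\|\boldsymbol{\Omega}_k^\varepsilon-\boldsymbol{\Omega}_k^i\|=\mu N_\kappa^i\|\nabla_{\boldsymbol{\Omega}_k^i}\mathcal{L}_k^i\|$ from Lemma~\ref{lem:inner}, and the $\varepsilon$-accuracy criterion to close. If anything, your version is tidier: you reach the stated bound directly without the paper's intermediate $(\delta_1\mu N_\kappa^i-1)^2$ rearrangement, and your explicit framing of the final step as a \emph{sufficient} condition for $\varepsilon$-accuracy is the logically sound reading of the implication the paper handles via its auxiliary Lemma~\ref{lem:xbound}.
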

\begin{proof}
    See Appendix \ref{apdx:prop2}.
\end{proof}
\begin{corollary}
For a client $k$ with a new localization task, given the two initialization schemes namely RI with $\boldsymbol\Omega_k^0$ and MI with $\boldsymbol\Omega_k^m$, we have the following:
    \begin{equation*}
        (N_\kappa^m)^2 - (N_\kappa^0)^2 < \frac{1}{\left ( \delta_1\mu \right )^2 }\left ( \frac{\varepsilon \Delta_2 -2G^o\left \|\nabla_{\boldsymbol\Omega_k^m}\mathcal L_k^m \right \|\Delta_1}{\Upsilon^2} \right )
    \end{equation*}
    where
    $$
    \begin{matrix}
 \Delta_1 &=& \left \|\nabla_{\boldsymbol\Omega_k^m}\mathcal L_k^m \right \|-\left \|\nabla_{\boldsymbol\Omega_k^0}\mathcal L_k^0 \right \|\\
\Delta_2 &=& \left \|\nabla_{\boldsymbol\Omega_k^m}\mathcal L_k^m \right \|^2-\left \|\nabla_{\boldsymbol\Omega_k^0}\mathcal L_k^0 \right \|^2\\
\Upsilon &=& \left \|\nabla_{\boldsymbol\Omega_k^m}\mathcal L_k^m \right \|\cdot \left \|\nabla_{\boldsymbol\Omega_k^0}\mathcal L_k^0 \right \|
\end{matrix}
    $$
    \label{cor:steps}
\end{corollary}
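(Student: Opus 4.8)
The plan is to obtain the corollary by instantiating Proposition~\ref{prop:steps} at each of the two candidate initializations for the \emph{same} test task $\mathcal{T}_\kappa$ and then subtracting the two resulting bounds. First, applying Proposition~\ref{prop:steps} with initial model $\boldsymbol\Omega_\kappa^m$ (the MI scheme) gives
\begin{equation*}
(N_\kappa^m)^2 < \frac{1}{\left(\delta_1\mu\right)^2}\left(\frac{\varepsilon - 2G^m}{\left\|\nabla_{\boldsymbol\Omega_k^m}\mathcal{L}_k^m\right\|^2} + 1\right),
\end{equation*}
and applying it with $\boldsymbol\Omega_\kappa^0$ (the RI scheme) gives the analogous statement with $m$ replaced by $0$. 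The crucial observation is that both schemes target the \emph{same} $\varepsilon$-accurate terminal model $\boldsymbol\Omega_\kappa^\varepsilon$ of Definition~\ref{def:accuracy}, so the factor $\left\|\nabla_{\boldsymbol\Omega_k^\varepsilon}\mathcal{L}_k^\varepsilon\right\|$ is common to $G^m = \left\|\nabla_{\boldsymbol\Omega_k^m}\mathcal{L}_k^m\right\|\left\|\nabla_{\boldsymbol\Omega_k^\varepsilon}\mathcal{L}_k^\varepsilon\right\|$ and to $G^0 = \left\|\nabla_{\boldsymbol\Omega_k^0}\mathcal{L}_k^0\right\|\left\|\nabla_{\boldsymbol\Omega_k^\varepsilon}\mathcal{L}_k^\varepsilon\right\|$.

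Next I would subtract the $0$-bound from the $m$-bound; the additive constant $1/(\delta_1\mu)^2$ cancels, leaving
\begin{equation*}
(N_\kappa^m)^2 - (N_\kappa^0)^2 < \frac{1}{\left(\delta_1\mu\right)^2}\left(\frac{\varepsilon - 2G^m}{\left\|\nabla_{\boldsymbol\Omega_k^m}\mathcal{L}_k^m\right\|^2} - \frac{\varepsilon - 2G^0}{\left\|\nabla_{\boldsymbol\Omega_k^0}\mathcal{L}_k^0\right\|^2}\right).
\end{equation*}
Placing the two fractions over the common denominator $\left\|\nabla_{\boldsymbol\Omega_k^m}\mathcal{L}_k^m\right\|^2\left\|\nabla_{\boldsymbol\Omega_k^0}\mathcal{L}_k^0\right\|^2 = \Upsilon^2$, the $\varepsilon$-terms of the numerator collect into $\varepsilon\big(\left\|\nabla_{\boldsymbol\Omega_k^m}\mathcal{L}_k^m\right\|^2 - \left\|\nabla_{\boldsymbol\Omega_k^0}\mathcal{L}_k^0\right\|^2\big)=\varepsilon\Delta_2$, while the $G$-terms collect into $2\big(G^0\left\|\nabla_{\boldsymbol\Omega_k^m}\mathcal{L}_k^m\right\|^2 - G^m\left\|\nabla_{\boldsymbol\Omega_k^0}\mathcal{L}_k^0\right\|^2\big)$; substituting the expressions for $G^m,G^0$ and factoring out $\left\|\nabla_{\boldsymbol\Omega_k^\varepsilon}\mathcal{L}_k^\varepsilon\right\|\left\|\nabla_{\boldsymbol\Omega_k^m}\mathcal{L}_k^m\right\|\left\|\nabla_{\boldsymbol\Omega_k^0}\mathcal{L}_k^0\right\|$ rewrites this as (up to sign) $2G^o\left\|\nabla_{\boldsymbol\Omega_k^m}\mathcal{L}_k^m\right\|\Delta_1$ with $\Delta_1=\left\|\nabla_{\boldsymbol\Omega_k^m}\mathcal{L}_k^m\right\|-\left\|\nabla_{\boldsymbol\Omega_k^0}\mathcal{L}_k^0\right\|$ and $G^o\equiv G^0$. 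Collecting everything over $\Upsilon^2$ yields the stated expression.

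The one delicate step — and the one I expect to demand the most care — is the legitimacy of subtracting two \emph{upper} bounds: Proposition~\ref{prop:steps} only upper-bounds each of $(N_\kappa^m)^2$ and $(N_\kappa^0)^2$, and in general $a<A$, $b<B$ does not imply $a-b<A-B$. To make this rigorous I would use that $N_\kappa^i$ is the first iteration index at which the $\varepsilon$-accuracy criterion of Definition~\ref{def:accuracy} is met, so that the bound of Proposition~\ref{prop:steps} is tight at that stopping time up to the \emph{same} slack for both schemes, whence the slack cancels in the difference; alternatively, one may simply adopt — as in Proposition~\ref{prop:steps} — the NTK linearization of $\mathcal{L}_k$ near initialization under which the step-count relation of Lemma~\ref{lem:inner} holds with equality. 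The remaining manipulations (clearing denominators, expanding the numerator, and matching terms against the definitions of $\Delta_1$, $\Delta_2$, $\Upsilon$ and $G^o$) are routine algebra and would be deferred to the appendix. One should also track carefully the signs carried by $\Delta_1$ and $\Delta_2$, which are typically negative when meta-initialization places $\boldsymbol\Omega_\kappa^m$ closer to $\boldsymbol\Omega_\kappa^\varepsilon$ than random initialization does; it is precisely these signs that render the right-hand side negative and thereby certify $N_\kappa^m < N_\kappa^0$, i.e., the accelerated convergence of the meta-initialized scheme.
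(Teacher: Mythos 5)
Your proof takes essentially the same route as the paper's: the published proof of Corollary~\ref{cor:steps} is a two-line remark that simply instantiates Proposition~\ref{prop:steps} at $i=0$ (RI) and $i=m$ (MI) and obtains the result ``by differencing the two expressions and rearranging the terms,'' which is exactly your computation over the common denominator $\Upsilon^2$. The delicate point you flag --- that subtracting two upper bounds is not a legitimate inequality operation --- is a genuine weakness, but the paper does not address it at all, so your sketch is, if anything, more careful than the published argument; just double-check the overall sign of the collected numerator against the stated $\varepsilon\Delta_2 - 2G^o\left\|\nabla_{\boldsymbol\Omega_k^m}\mathcal L_k^m\right\|\Delta_1$ when you write out the routine algebra.
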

\begin{proof}
    See Appendix \ref{apdx:cor2}.
\end{proof}
\begin{remark}
    Thanks to the pretrained meta-model $\boldsymbol\theta^m$, the MI scheme yields an $\varepsilon_m-$accuracy.  
\end{remark}
\begin{proposition}
Let $N_\kappa^m$ be the required number of steps for the new client $\kappa$ to achieve an $\varepsilon_m-$accurate solution during the meta-testing phase in Algorithm \ref{alg:femloc-test} (MI scheme). Furthermore, we denote by $N_\kappa^0$ the required number of steps in the RI scheme, i.e., performing the training from scratch. It follows that:
\begin{enumerate}
    \item Ensuring a small enough $\varepsilon_m$ leads to faster convergence of the MI scheme, that is:
\begin{equation*}
    \varepsilon_m \leq \frac{\varepsilon^2}{\left ( 2\zeta \sqrt{\varepsilon}-\varepsilon \right )^2}\zeta^2 \rightarrow N_\kappa^m < N_\kappa^0
\end{equation*}
\item The difference in the required number of steps is upper-bounded as shown below:
\begin{equation*}
    (N_\kappa^m)^2 - (N_\kappa^0)^2 < \frac{\eta R\zeta }{\left ( \delta_1\Upsilon \mu \right )^2 }\left ( \delta_2\varepsilon +2\zeta \delta_1\sqrt{\varepsilon_m\varepsilon } \right )
\end{equation*}
\item The loss of the $\varepsilon-$accurate model satisfies
\begin{equation*}
    \mathcal L_k^\varepsilon < \frac{\gamma \mu }{2}\left (N_\kappa^0 \zeta +\sqrt{\varepsilon_m}N_\kappa^m \right )+\frac{\mathcal L_k^0+\mathcal L_k^m}{2}
\end{equation*}
\end{enumerate}
\label{prop:prop3}
\end{proposition}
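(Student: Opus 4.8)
The plan is to derive all three items from Proposition~\ref{prop:steps} and Corollary~\ref{cor:steps} by inserting the magnitude bounds that the earlier assumptions supply for the gradients involved. As preliminaries I would record: $\|\nabla_{\boldsymbol{\Omega}_k^0}\mathcal{L}_k^0\|\le\zeta$ from Assumption~\ref{ass:vbound} (the RI initialization); $\|\nabla_{\boldsymbol{\Omega}_k^m}\mathcal{L}_k^m\|\le\sqrt{\varepsilon_m}$ from Remark~1 together with Definition~\ref{def:accuracy} (the MI initialization is $\varepsilon_m$-accurate); and $\|\nabla_{\boldsymbol{\Omega}_k^\varepsilon}\mathcal{L}_k^\varepsilon\|\le\sqrt{\varepsilon}$ from the $\varepsilon$-accuracy of the target iterate. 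Moreover, Lemma~\ref{lem:outer} gives $\boldsymbol{\theta}^m-\boldsymbol{\theta}^0=-\eta R\,\mathbb{E}_k\!\left[\nabla_{\boldsymbol{\theta}_k^{N,t}}\mathcal{L}_k\right]$, and since the MI and RI initializations share the same encoder/mapper blocks, combining this with Jensen's inequality and Assumption~\ref{ass:vbound} yields $\|\boldsymbol{\Omega}_k^m-\boldsymbol{\Omega}_k^0\|=\|\boldsymbol{\theta}^m-\boldsymbol{\theta}^0\|\le\eta R\zeta$. These four estimates are the only new ingredients; the rest is algebra on bounds already established.

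I would settle item~3 first, since it is self-contained. By Lemma~\ref{lem:inner}, $N_\kappa^i$ inner steps from an initialization $\boldsymbol{\Omega}_k^i$ reach $\boldsymbol{\Omega}_k^\varepsilon=\boldsymbol{\Omega}_k^i-\mu N_\kappa^i\nabla_{\boldsymbol{\Omega}_k^i}\mathcal{L}_k^i$, so $\|\boldsymbol{\Omega}_k^\varepsilon-\boldsymbol{\Omega}_k^0\|\le\mu N_\kappa^0\zeta$ along the RI trajectory and $\|\boldsymbol{\Omega}_k^\varepsilon-\boldsymbol{\Omega}_k^m\|\le\mu N_\kappa^m\sqrt{\varepsilon_m}$ along the MI trajectory, by the preliminary gradient bounds. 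Applying $\gamma$-Lipschitzness (Assumption~\ref{ass:lipsz}) to each gives $\mathcal{L}_k^\varepsilon\le\mathcal{L}_k^0+\gamma\mu N_\kappa^0\zeta$ and $\mathcal{L}_k^\varepsilon\le\mathcal{L}_k^m+\gamma\mu N_\kappa^m\sqrt{\varepsilon_m}$, and averaging the two inequalities produces exactly the stated bound on $\mathcal{L}_k^\varepsilon$.

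For item~1 I would start from the difference bound of Corollary~\ref{cor:steps}: $N_\kappa^m<N_\kappa^0$ holds once its right-hand side is non-positive, i.e. once $\varepsilon\Delta_2\le 2G^o\,\|\nabla_{\boldsymbol{\Omega}_k^m}\mathcal{L}_k^m\|\,\Delta_1$. Using $\Delta_2/\Delta_1=\|\nabla_{\boldsymbol{\Omega}_k^m}\mathcal{L}_k^m\|+\|\nabla_{\boldsymbol{\Omega}_k^0}\mathcal{L}_k^0\|$ and the fact that the meta-initialization is the better one ($\Delta_1<0$), this reduces to $\varepsilon\left(\|\nabla_{\boldsymbol{\Omega}_k^m}\mathcal{L}_k^m\|+\|\nabla_{\boldsymbol{\Omega}_k^0}\mathcal{L}_k^0\|\right)\ge 2\,\|\nabla_{\boldsymbol{\Omega}_k^\varepsilon}\mathcal{L}_k^\varepsilon\|\,\|\nabla_{\boldsymbol{\Omega}_k^m}\mathcal{L}_k^m\|\,\|\nabla_{\boldsymbol{\Omega}_k^0}\mathcal{L}_k^0\|$; substituting $\|\nabla_{\boldsymbol{\Omega}_k^0}\mathcal{L}_k^0\|\le\zeta$ and $\|\nabla_{\boldsymbol{\Omega}_k^\varepsilon}\mathcal{L}_k^\varepsilon\|\le\sqrt{\varepsilon}$, then solving for $\|\nabla_{\boldsymbol{\Omega}_k^m}\mathcal{L}_k^m\|\le\sqrt{\varepsilon_m}$, yields the threshold $\varepsilon_m\le\varepsilon^2\zeta^2/(2\zeta\sqrt{\varepsilon}-\varepsilon)^2$. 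For item~2 I would again take the Corollary~\ref{cor:steps} numerator $\varepsilon\Delta_2-2G^o\|\nabla_{\boldsymbol{\Omega}_k^m}\mathcal{L}_k^m\|\Delta_1$ and bound it from above: the first-order smoothness (Assumption~\ref{ass:smoothness}a) gives $|\Delta_1|\le\delta_1\|\boldsymbol{\Omega}_k^m-\boldsymbol{\Omega}_k^0\|\le\delta_1\eta R\zeta$, while the second-order smoothness (Assumption~\ref{ass:smoothness}b) gives $|\Delta_2|\le\delta_2\eta R\zeta$; inserting these along with $\|\nabla_{\boldsymbol{\Omega}_k^m}\mathcal{L}_k^m\|\le\sqrt{\varepsilon_m}$, $\|\nabla_{\boldsymbol{\Omega}_k^0}\mathcal{L}_k^0\|\le\zeta$, $\|\nabla_{\boldsymbol{\Omega}_k^\varepsilon}\mathcal{L}_k^\varepsilon\|\le\sqrt{\varepsilon}$ and dividing by $(\delta_1\mu\Upsilon)^2$ collects the claimed $\tfrac{\eta R\zeta}{(\delta_1\Upsilon\mu)^2}\left(\delta_2\varepsilon+2\zeta\delta_1\sqrt{\varepsilon_m\varepsilon}\right)$.

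The main obstacle is item~2: aligning the constants requires propagating the $\mathcal{O}(\eta R\zeta)$ gap between the MI and RI initializations carefully through both smoothness assumptions and keeping track of which gradient-norm estimate is used tightly in each term; it also leans on the term-by-term subtraction of the two Proposition~\ref{prop:steps} bounds that underlies Corollary~\ref{cor:steps}, which I would re-derive directly rather than invoke verbatim in order to keep the signs under control. Item~1's algebra is the secondary delicate point: the threshold on $\varepsilon_m$ appears only because $\boldsymbol{\Omega}_k^m$ is genuinely the better initialization ($\Delta_1<0$), so one must check that this is what turns ``numerator non-positive'' into a genuine constraint on $\varepsilon_m$ rather than a vacuous condition.
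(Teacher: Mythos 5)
Your proposal follows essentially the same route as the paper's own proof: item~3 via Lemma~\ref{lem:inner} (inner-loop displacement) combined with Assumption~\ref{ass:lipsz} and averaging the RI and MI bounds; item~1 by forcing the numerator in Corollary~\ref{cor:steps} to be non-positive, factoring $\Delta_2=\Delta_1\left(\left\|\nabla\mathcal L_k^m\right\|+\left\|\nabla\mathcal L_k^0\right\|\right)$ and substituting $\left\|\nabla\mathcal L_k^0\right\|\le\zeta$, $G^0\le\zeta\sqrt{\varepsilon}$ and $\left\|\nabla\mathcal L_k^m\right\|\le\sqrt{\varepsilon_m}$ via Lemma~\ref{lem:xbound}; and item~2 by bounding $|\Delta_1|\le\delta_1\eta R\zeta$ and $|\Delta_2|\le\delta_2\eta R\zeta$ through Assumption~\ref{ass:smoothness} and Lemma~\ref{lem:outer} before dividing by $(\delta_1\Upsilon\mu)^2$. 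If anything, you are more explicit than the paper about the sign of $\Delta_1$ when dividing through in item~1 and about the Jensen step hiding in the $\eta R\zeta$ displacement bound, but the thresholds and final inequalities you reach coincide with the paper's.
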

\begin{proof}
    See Appendix \ref{apdx:prop3}.
\end{proof}
Consequently, the pretrained meta-model resulting from Algorithm \ref{alg:femloc-train} and used to initialize the new client $\kappa$ in  Algorithm \ref{alg:femloc-test} required fewer SGD iterations to achieve an $\varepsilon-$accurate localization model for the new environment. This theoretical result is backed up with experimental validation in the next section, for different indoor scenarios.

\section{Performance evaluation and discussion}
\label{sec:perfeval}

\subsection{Evaluation metrics}
To comprehensively assess FeMLoc's effectiveness, we evaluate its performance across various practical use cases using two key metrics: localization error (termed accuracy here) and adaptation speed to novel environments.
We measure localization accuracy using the mean distance error (MDE), which reflects the average distance between the predicted locations and the actual locations of IoT devices, and computed as follows:
 \begin{equation}
    MDE = \frac{1}{N}\sum_{i=1}^{N}\left \|  \mathbf{\hat y}_i-\mathbf{y}_i\right \|_2,
\end{equation}
with $\mathbf{y}_i$ and $\mathbf{\hat y}_i$ and the actual and predicted locations of target device $i$, respectively. $N$ is the total number of target devices to localize. This metric aligns well with regression tasks aiming for fine-grained localization, as opposed to zone classification where metrics like precision, recall, and F1-score would be more suitable.

Adaptation speed, on the other hand, gauges how rapidly the model adjusts to new environments by measuring its convergence rate towards a desired accuracy level.
We consider two approaches to calculate adaptation speed: (i) Accuracy-based adaptation speed $\Im(\mathcal A)$: This approach defines adaptation speed as the inverse of the number of training steps required to achieve a target accuracy value $\mathcal A$ in the new environment. It is formulated as:
\begin{equation}
    \Im(\mathcal A) = \mathbb{E}\left [ \max_n \left [ \frac{1}{b\times \mathcal N(\mathcal A(n))} \right ] \right ].
\end{equation}
Here, $\mathbb{E}$ denotes the expectation function, $n$ represents the number of gradient steps, $b$ is the batch size, and $ \mathcal N(\mathcal A(n))$ represents the number of training steps needed to reach accuracy $\mathcal A$. With $\mathcal A(n) = MDE\mid n$, $\mathcal N(\mathcal A(n)) = \mathcal A^{-1}(n) = n \mid MDE$.
This approach focuses on the time it takes for the model to converge to a specific accuracy threshold. A higher value of $\Im(\mathcal A)$ indicates superior model performance.
(ii) Step-based adaptation speed ($\Im(n^*)$): Alternatively, we can assess the model's accuracy after a fixed number of training steps $n^*$. In this case, adaptation speed is defined as the inverse of the $MDE$ achieved at step $n^*$. This is mathematically expressed as:
\begin{equation}
     \Im(n^*) =   \frac{1}{b}  \times \mathcal \mathcal A(n^*).
\end{equation}
This approach provides a snapshot of the model's accuracy at a specific point during the adaptation process. In this context, a lower value of $\Im(\mathcal n^*)$ signifies superior model performance.

By employing both localization accuracy and adaptation speed, we can gain a comprehensive understanding of FeMLoc's ability to provide accurate and efficient indoor localization under diverse real-world conditions.

 \begin{figure*}[t]
\centering
  \begin{subfigure}[t]{.32\textwidth}
    \centering
    \includegraphics[width=\linewidth, height=0.2\textheight]{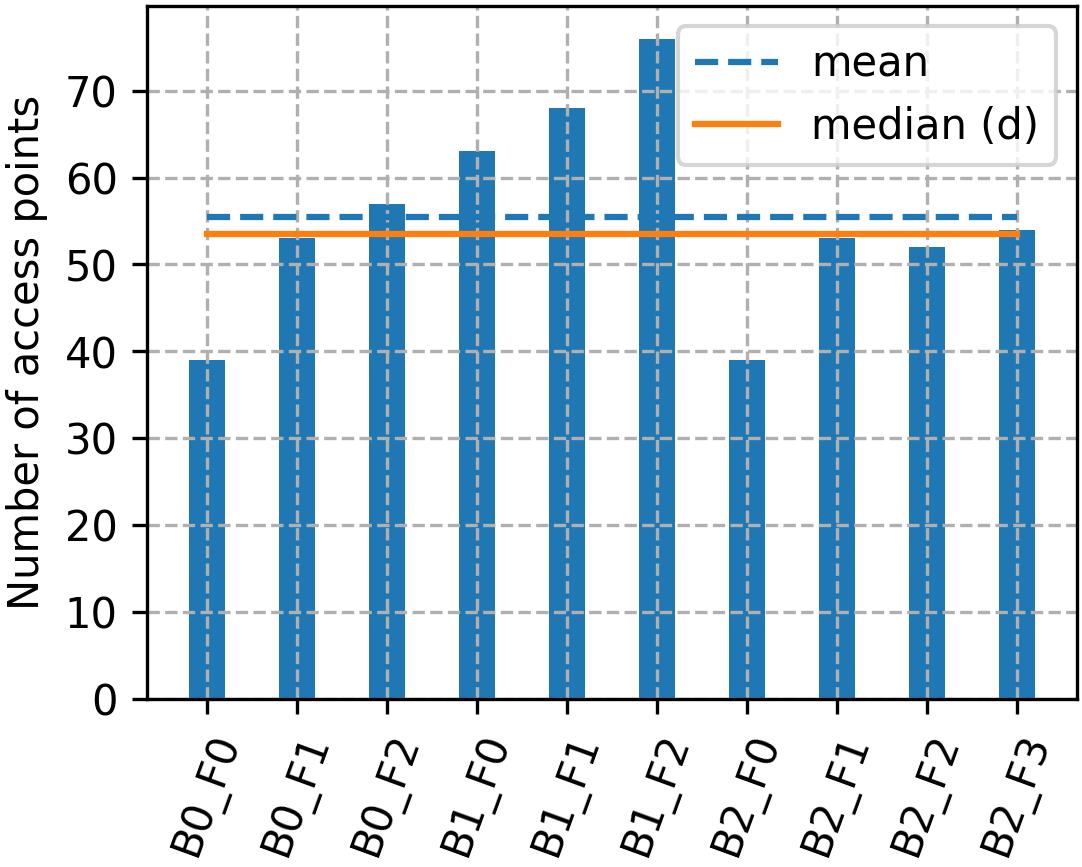}
    \caption{}
    \label{fig:distribution}
  \end{subfigure}
  \hfill
  \begin{subfigure}[t]{.34\textwidth}
    \centering
    \includegraphics[width=\linewidth, height=0.2\textheight]{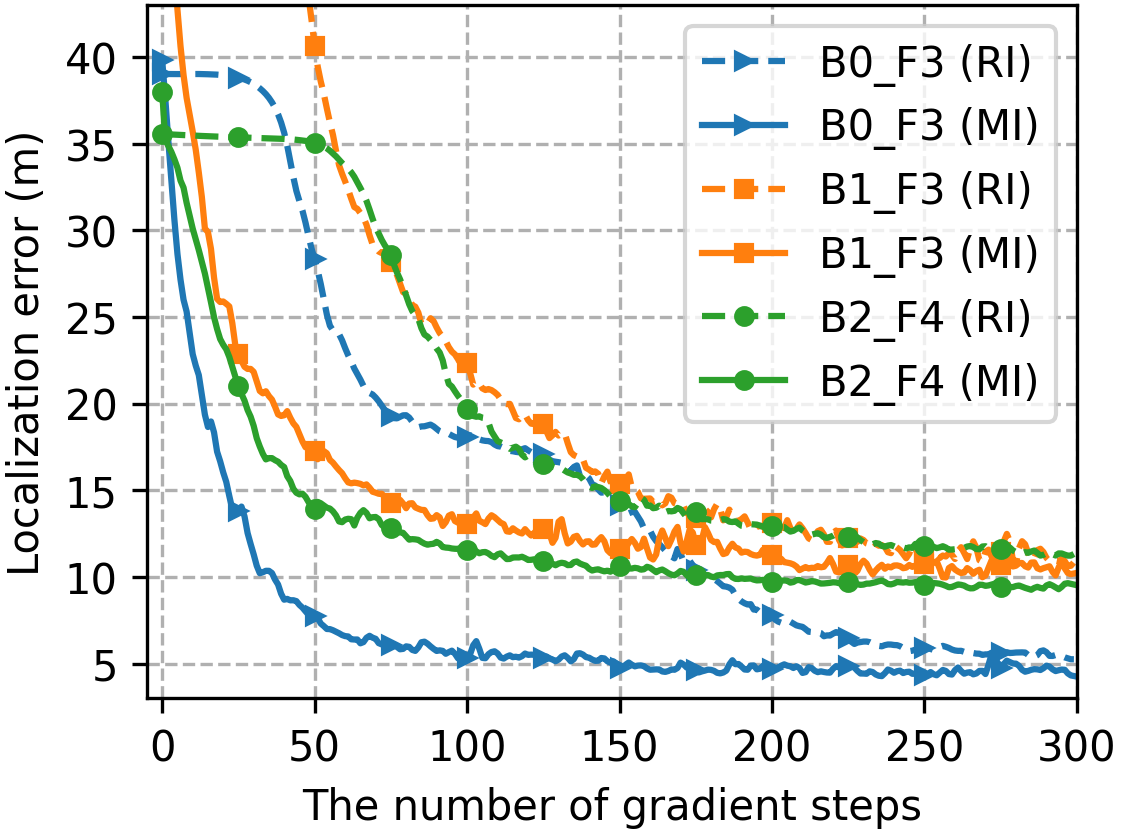}
    \caption{}
    \label{fig:curves}
  \end{subfigure}
  \hfill
  \begin{subfigure}[t]{.32\textwidth}
    \centering
    \includegraphics[width=\linewidth, height=0.2\textheight]{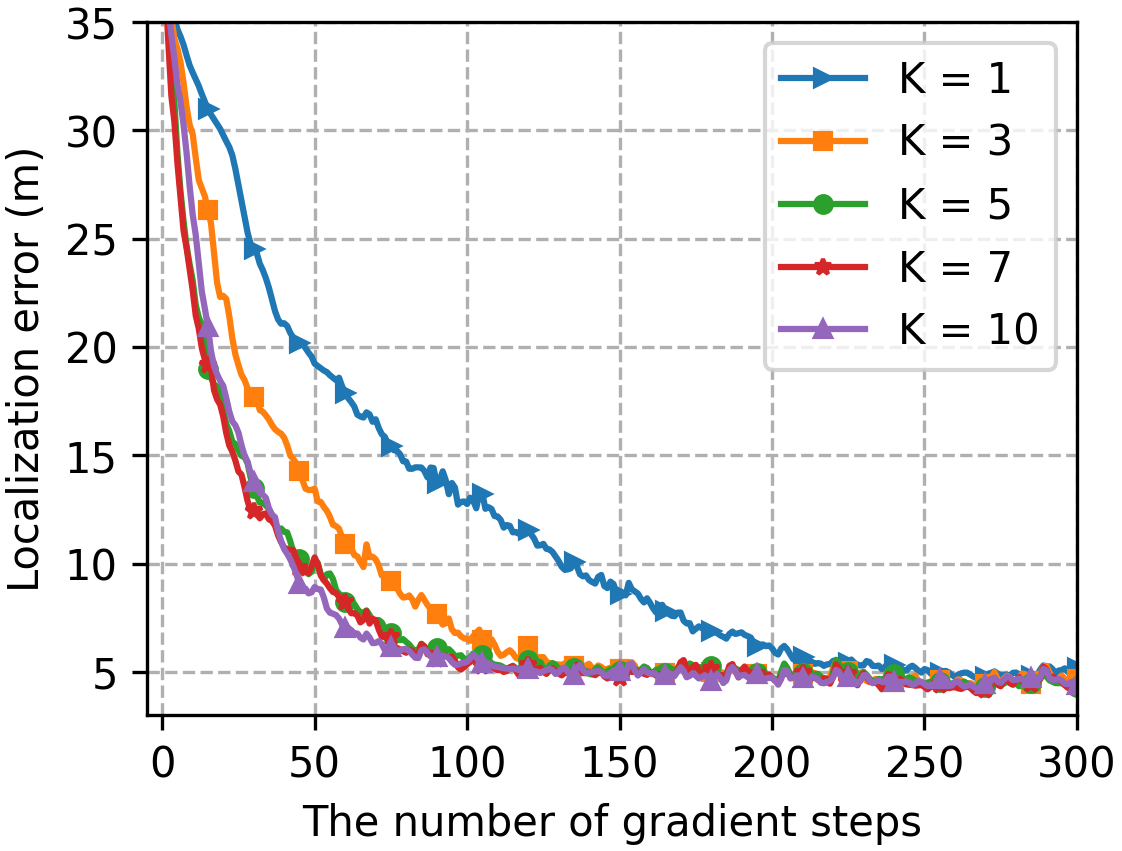}
    \caption{}
    \label{fig:task}
  \end{subfigure}
  \caption{Performance analysis of FeMLoc on UJIIndoorLoc dataset \cite{ujiindoorloc}: (a) Input signal space distribution of localization training tasks \{$m_k$\} (b) Learning curves of test tasks (c) Learning curves with varying number K of training task.}
  \label{fig:perf}
\end{figure*}

\begin{figure}[!t]
    \centering
    \includegraphics[scale=0.8]{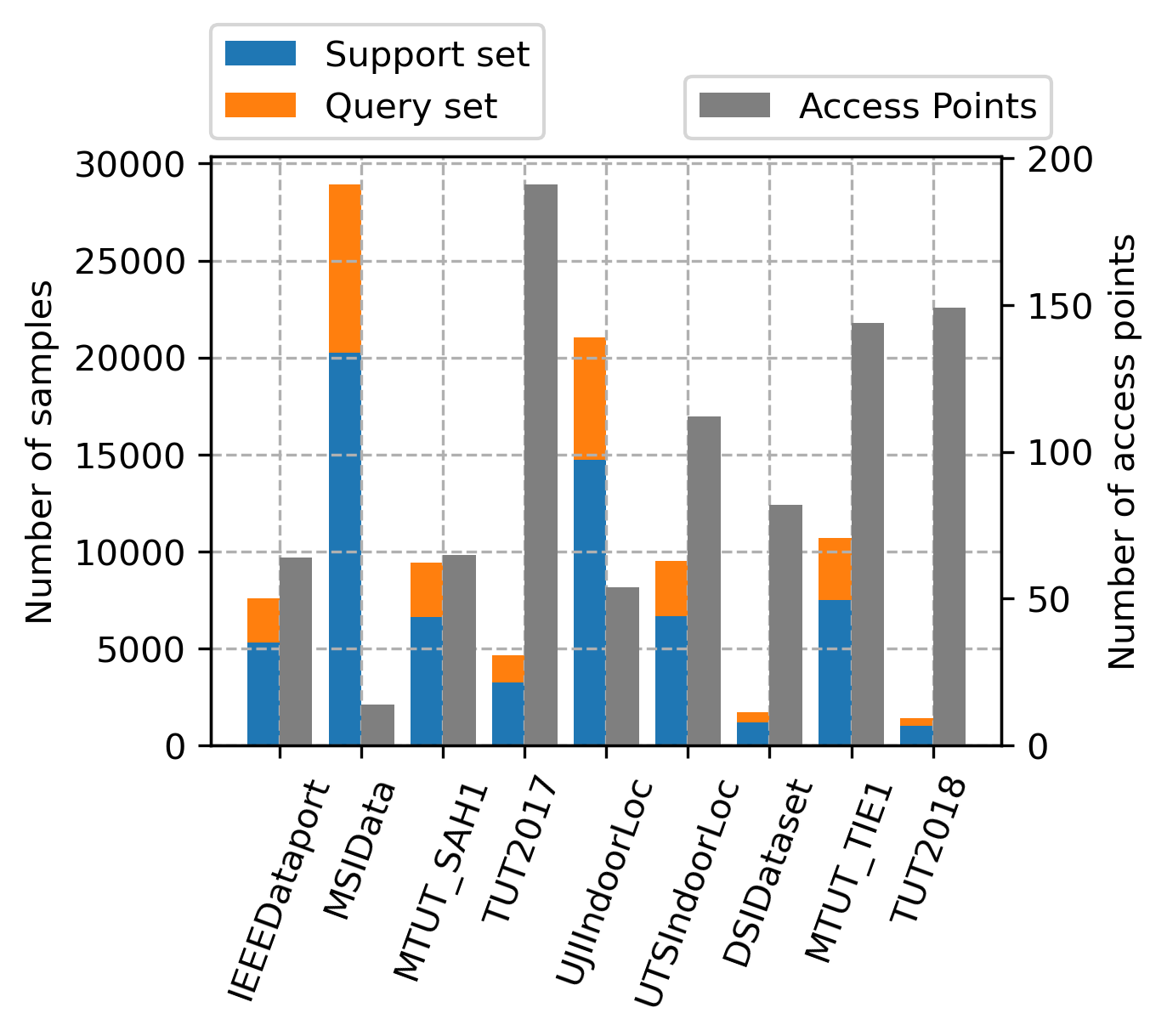}
    \caption{Data distribution}
    \label{fig:gendist}
\end{figure}
\subsection{Multi-floor Learning: Evaluation on UJIIndoorLoc dataset (EXP1)}
\label{subsec:scalability}
In this part, we present the performance analysis of our proposed framework using a publicly available RSSI fingerprinting dataset obtained from a real-world experiment. Specifically, we utilize the UJIIndoorLoc dataset \cite{ujiindoorloc}, which was collected on a campus encompassing three buildings, each comprising 4 to 5 floors. For the purpose of our experiment, each floor represents an individual indoor localization task.
The individual floor data normalization has been conducted following the  preprocessing method presented earlier in Section \ref{sec:preproc}.

In this study, the experimental setup involved dividing the floors of the buildings into meta-testing and meta-training sets. Three floors, one in each building, were reserved for the meta-testing phase, while the remaining ten floors were used for the meta-training. To represent the localization tasks each assigned to a client for federated training, a naming convention \textit{B\{i\}\_F\{j\}} was adopted, where $j$ denoted the floor ID and $i$ representing the ID of the building containing the floor $j$. Fig. \ref{fig:distribution} visually demonstrates the heterogeneity of indoor localization tasks in terms of the number of APs, highlighting the need for model personalization in the proposed collaborative learning framework.

The configurations of the general models are shown in TABLE \ref{tab:model}, where the input space of each client is set by the number of APs appearing in its localization task. A batch size of 32 was selected for all local gradient updates, and the outer learning rate ($\eta$) was set to 0.001.
Additionally, the update frequency, defined as the number of local gradient steps per communication round, was set to $N_k = N =5$ for all clients, with the mean squared error (MSE) as the loss function. Note that these hyperparameter configurations, including those listed in TABLE \ref{tab:model}, were used for all the practical use cases evaluated in our framework.

Note that to select these hyperparameters for FeMLoc, we adopted a two-stage approach. First, we drew inspiration from existing research on similar RSSI datasets (TABLE \ref{tab:datasets}) to establish a solid foundation for our hyperparameters' selection. Second, we performed further fine-tuning during a preliminary phase to obtain the final chosen hyperparameters that we reported for transparency and reproducibility.

\begin{remark}
\label{rem:split}
The experiment was conducted through the repetitive random splitting of tasks into training and testing sets (in a ratio of 10:3). Despite these multiple iterations, consistent outcomes were consistently obtained. For clarity in this presentation, only one experiment result has been reported.
\end{remark}
Following the training phase, comprising a total of R=1000 communication rounds, the global meta-model is utilized to initialize the model of the test tasks, for which the training performance is presented in Fig. \ref{fig:perf}. 
It illustrates the learning curves for various test tasks, highlighting the impact of meta-initialization compared to random initialization. Each test task has two curves: one for a randomly initialized (RI) model and another for a model initialized with the pre-trained global meta-model (MI). The results clearly demonstrate that meta-initialization (MI) significantly accelerates convergence compared to random initialization (RI) across all test tasks, thereby achieving the principal objective of the proposed framework. Furthermore, meta-initialization is observed to produce enhanced accuracy, particularly in scenarios where the test task is characterized by limited data samples. 

Detailed quantitative results are provided in the EXP1 section of TABLE~\ref{tab:adaptationspeed}. For example, for a target accuracy of $5m$, FeMLoc (MI) achieves adaptation $\Im (\mathcal{A})$ 67.74\% faster than RI in the B0\_F3 environment with 71.02\% improvement in the localization accuracy.
These evaluations demonstrate FeMLoc's ability to rapidly adapt to new indoor environments, facilitating large-scale deployment of indoor localization solutions with minimal retraining efforts.

Moreover, an exploration of the convergence speed using meta initialization concerning the number of training tasks was undertaken, and the corresponding outcome is displayed in Fig. \ref{fig:task}. As anticipated, an increase in the number of clients engaged in the meta-training process leads to expedited adaptation during the meta-testing phase. It is worth noting that K = 1 is equivalent to transfer learning where knowledge acquired through a given floor of the indoor environment is transferred to a new floor in the same indoor environment.
We further investigated the impact of the number of training tasks (clients) on convergence speed using meta-initialization. Results are presented in Fig.~\ref{fig:task}. As expected, increasing the number of clients participating in the meta-training process leads to faster adaptation during the meta-testing phase. 

\begin{table}[!t]
\caption{Client model configuration.}
\begin{tabular}{ll|llll}
 &               & encoder & decoder & local meta-model & mapper  \\
 \hline
 & input layer   & $m_k$    & d=50    & d=50       & n=32    \\
 & hidden layer  & 1024    & 1024    & 256x128x64 & 64x32   \\
 & output        & d=50    & $m_k$     & n=32       & p=2     \\
 & optimizer     & ADAM    & ADAM    & ADAM       & ADAM    \\
 & learning rate $\mu_k$ & 0.0095  & 0.0095  & 0.0005   & 0.0005
\end{tabular}
\label{tab:model}
\end{table}

As a result, FeMLoc effectively tackles scalability challenges for indoor localization across large, multi-building campuses. It demonstrates strong adaptability and robust performance in floor-level localization tasks in diverse architectural environments. It is worth noting that this 10-limit in the scalability study was constrained by the availability of localization datasets (number of clients $K=1,2,\cdots,10$).
\begin{figure*}[!t]
\centering
  \begin{subfigure}[t]{.32\textwidth}
    \centering
    \includegraphics[width=\linewidth, height=0.2\textheight]{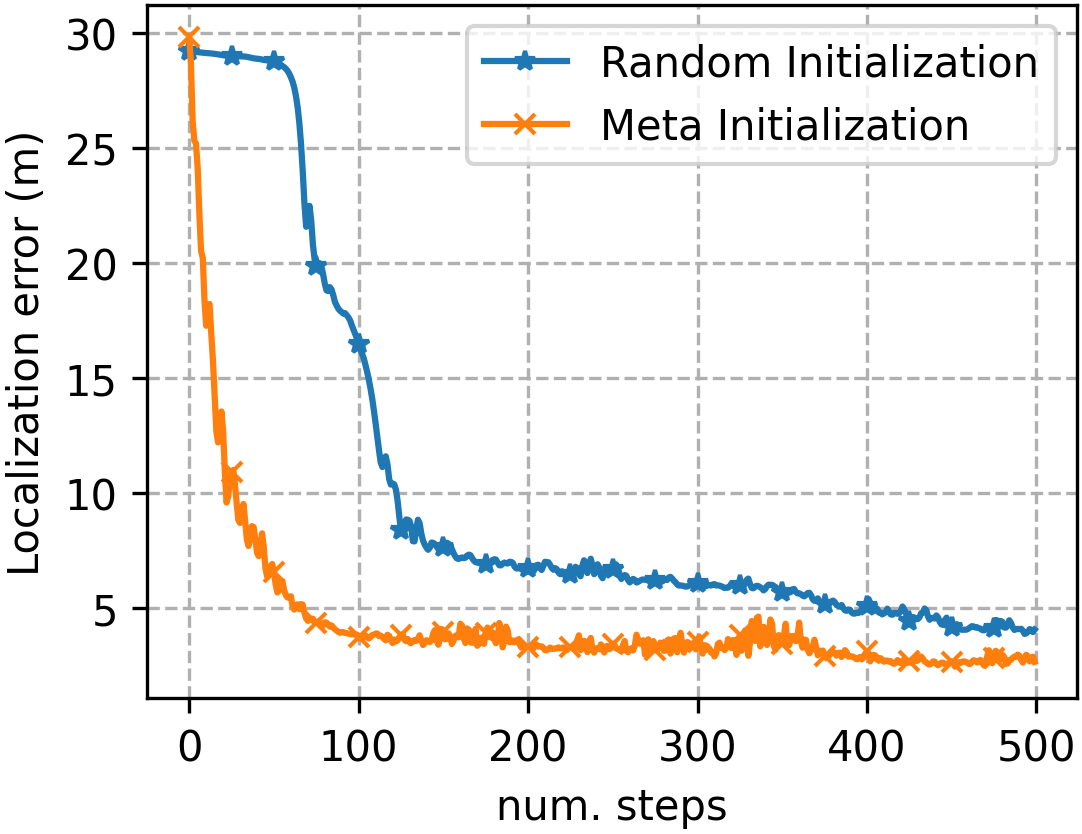}
    \caption{}
    \label{fig:test1}
  \end{subfigure}
  \hfill
  \begin{subfigure}[t]{.34\textwidth}
    \centering
    \includegraphics[width=\linewidth, height=0.2\textheight]{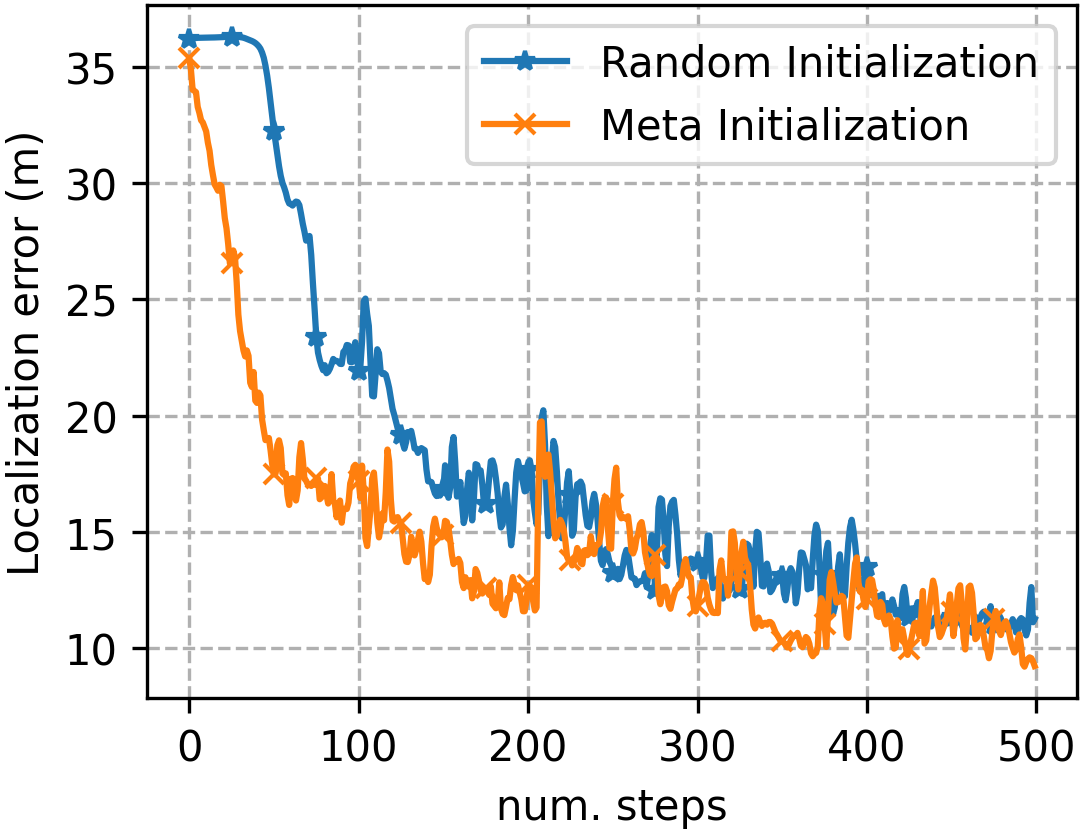}
    \caption{}
    \label{fig:test2}
  \end{subfigure}
  \hfill
  \begin{subfigure}[t]{.32\textwidth}
    \centering
    \includegraphics[width=\linewidth, height=0.2\textheight]{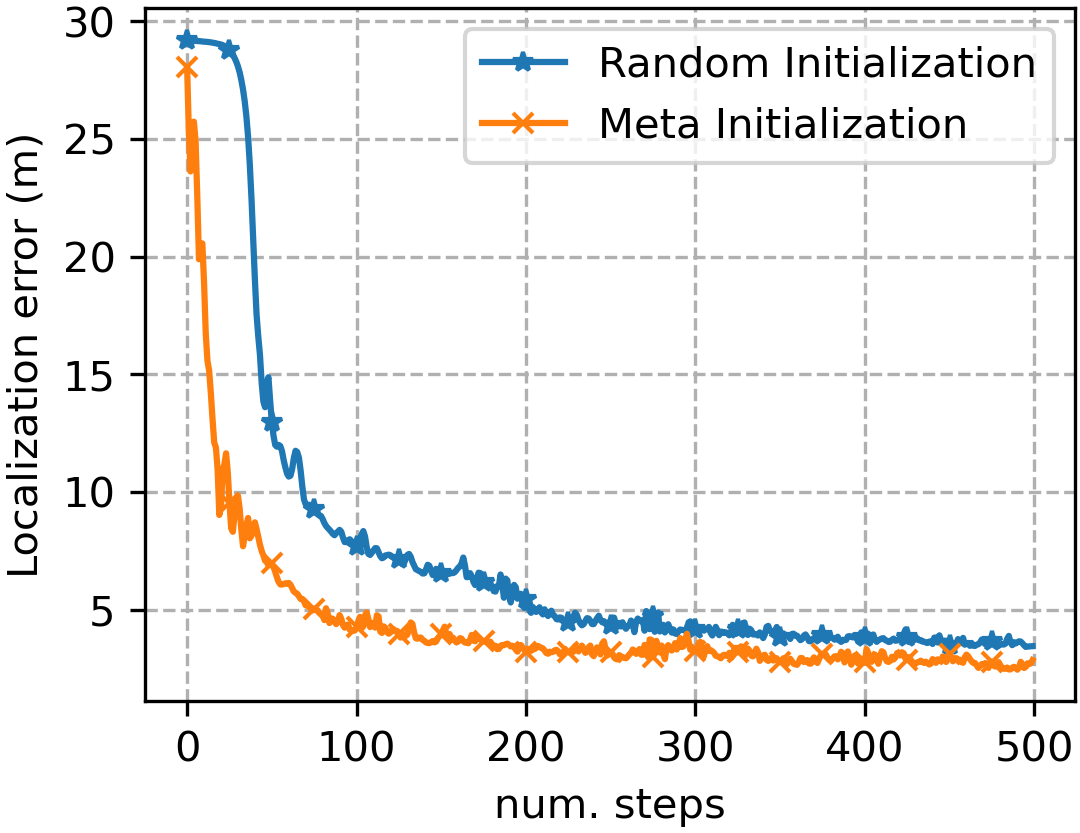}
    \caption{}
    \label{fig:test3}
  \end{subfigure}
  \caption{Performance analysis of FeMLoc on multi-environment datasets: (a) DSIDataset (b) TUT2018 (c) MTU\_TIE1.}
  \label{fig:generalization}
\end{figure*}

\begin{table*}[!t]
\centering
\caption{Performance evaluation of FeMLoc in different experimental setups.} 
\begin{tabular}{|c|c|c|c|c|c|c|cc|c|c|c|c|c|}
\hline

Experiment & Environment & $\mathcal A (m)$ & Method    & n & $\Im(\mathcal A) \times 10^{-3}$   &  \%$\uparrow$ && &$n^*$& Method& $\mathcal A (m)$ &$\Im(n^*)$&  \%$\uparrow$\\ \hline \hline
                      
\multirow{12}{*}{EXP1} & \multirow{6}{*}{B0\_F3} & \multirow{2}{*}{5}  & MI     & 100  & 0.31   &  67.74   && &\multirow{2}{*}{50} & MI&7.5 &0.23 &72.63\\ \cline{4-7} 
                      &                         &                     & RI     & 310  & 0.10   &   --  && & &RI &27.4 &0.86 &--\\ \cline{3-7} \cline{10-14} 
                      &                         & \multirow{2}{*}{10} & MI      & 30  & 1.04  & 82.86   && &\multirow{2}{*}{100} &MI &5.1 &0.16 &71.02\\ \cline{4-7} 
                      &                         &                     & RI      & 175  & 0.18  & --   && & &RI &17.6 &0.55 &--\\ \cline{3-7} \cline{10-14} 
                      &                         & \multirow{2}{*}{15} & MI      & 25  & 1.25   & 83.11    && &\multirow{2}{*}{150} &MI &4.9 &0.15 &66.89\\ \cline{4-7} 
                      &                         &                     & RI      &175  & 0.21   &  --  && & &RI &14.8 &0.46 &--\\ \cline{2-7} \cline{10-14}
                      & \multirow{2}{*}{B1\_F3} & \multirow{2}{*}{12}  & MI      & 150  &0.21    &34.78   && &\multirow{2}{*}{100} &MI &13.1 &0.41 &41.26\\ \cline{4-7} 
                      &                         &                     & RI      &230   &0.14    & -- && & &RI &22.3 &0.7 &--\\ \cline{3-7}  \cline{10-14} 
    \cline{2-7} 
                      & \multirow{2}{*}{B2\_F4} & \multirow{2}{*}{10}  &MI       &175   & 0.18   &  50.56    && &\multirow{2}{*}{100} &MI &12.2 &0.38  &38.38\\ \cline{4-7} 
                      &                         &                     & RI      &354   & 0.09  & --    && & &RI &19.8 &0.62 &--\\ \cline{3-7} \cline{10-14}  
  \hline \hline
\multirow{6}{*}{EXP2} & \multirow{2}{*}{DSIDataset} & \multirow{2}{*}{5}& MI     &60   & 0.52   &  82.21  && &\multirow{2}{*}{100} &MI &3.2 &0.1 &80.95\\ \cline{4-7} 
                      &                         &                     & RI     &380   &0.08     &--    && & &RI &16.8 &0.52 &--\\ \cline{3-7}  \cline{10-14} 
      \cline{2-7} 
                      & \multirow{2}{*}{TUT2018} & \multirow{2}{*}{12}  & MI     &200   &0.16    &48.05    && &\multirow{2}{*}{100} &MI &15.3 &0.48  &30.77 \\ \cline{4-7} 
                      &                         &                     & RI      &385   &0.08   &  --  && & &RI &22.1 &0.69  &-- \\ \cline{3-7} \cline{10-14} 
           \cline{2-7} 
                      & \multirow{2}{*}{MTU\_TIE1} & \multirow{2}{*}{5} &MI      & 72  & 0.43   & 64.0   && &\multirow{2}{*}{100} &MI &4.2 &0.15  &39.23\\ \cline{4-7} 
                      &                         &                     &  RI      & 200  & 0.16  & --    && & &RI &7.9 &0.28  &--\\ \cline{3-7} \cline{10-14}  
          \hline \hline
\multirow{6}{*}{EXP3} & \multirow{2}{*}{13th month} & \multirow{2}{*}{2.5}  & MI     &100   &0.31    &66.67   && &\multirow{2}{*}{200} &MI &2.24 & 0.07  &25.83\\ \cline{4-7} 
                      &                         &                     & RI     &300   & 0.1   & --   && & &RI &3.02 & 0.09 &--\\ \cline{3-7}  \cline{10-14} 
        \cline{2-7} 
                      & \multirow{2}{*}{14th month} & \multirow{2}{*}{2.5}  &MI        &85   &0.37    &73.85    && &\multirow{2}{*}{200} &MI &2.22 &0.07  &33.73\\ \cline{4-7} 
                      &                         &                     &RI        & 325  &0.1    & --   && & &RI &3.35 &0.1  &--\\ \cline{3-7}  \cline{10-14} 
        \cline{2-7} 
                      & \multirow{2}{*}{15th month} & \multirow{2}{*}{2.5}  & MI       &85   &0.37    & 78.75   && &\multirow{2}{*}{200} &MI &2.21 & 0.07& 35.94\\ \cline{4-7} 
                      &                         &                     &RI        & 400  &0.08    & --   && & &RI &3.45 &0.11  &--\\ \cline{3-7}  \cline{10-14} 
           \hline
\end{tabular}
\label{tab:adaptationspeed}
\end{table*}

\subsection{Multi-environment Learning (EXP2)}
\label{subsec:generalization}
In this section, we present an evaluation of our framework's performance using experimental datasets from diverse campuses and distinct indoor spaces.
This evaluation aims to demonstrate the framework's generalization capabilities and its ability to adapt to novel settings.

For this experiment, we consider training datasets comprising samples collected from various indoor environments, each representing unique architectural characteristics and environmental conditions. Fig.~\ref{fig:gendist} illustrates the distribution of these datasets, highlighting the significant heterogeneity across the different indoor environments. This diversity motivates the general architecture proposed in FeMLoc, which is designed to be adapted to such variations. We run the experiment using the same model architecture presented in TABLE \ref{tab:model} with the hyper-parameters of Section \ref{subsec:scalability}.
Fig.~\ref{fig:generalization} showcases the effectiveness of our approach by comparing FeMLoc (MI) with a baseline model initialized randomly (RI). Note that Remark~\ref{rem:split} regarding the data split ratio still applies here.

The results in Fig.~\ref{fig:generalization} clearly demonstrates FeMLoc's (MI) superior ability to generalize and rapidly adapt to new indoor environments. This translates to successful localization even when encountering unseen environments. Detailed quantitative results are provided in the EXP1 section of TABLE~\ref{tab:adaptationspeed}. Notably, FeMLoc achieves up to 82.21\% faster adaptation compared to the baseline while guaranteeing up to 80.95\% improvement in localization accuracy after only 100 steps. These findings underscore the framework's potential for real-world applications and highlight its contribution towards advancing indoor positioning solutions that transcend the boundaries of individual indoor environments.

Furthermore, it worth recalling that our proposed  framework represents an advanced form of transfer learning, amplifying its significance in addressing indoor localization challenges across varying environments. Traditional transfer learning applies a pre-trained model from one environment (source domain) to a new environment (target domain). However, FeMLoc leverages knowledge from various environments (domains) while accounting for their inherent differences. This enables more general knowledge transfer across diverse indoor spaces, resulting in faster adaptation. By dynamically updating the meta-model through localized training, our framework goes beyond conventional transfer learning, showcasing its ability to refine and fine-tune knowledge for each specific environment, thereby achieving superior generalization and adaptability. This enhanced transfer learning paradigm enriches the framework's potential to excel in complex and dynamic indoor positioning scenarios, underscoring its role as a pioneering solution in the realm of indoor localization.



\subsection{Dynamic environment Learning (EXP3)}
\label{subsec:dynamic}
In this section, we present a comprehensive analysis of our framework's performance in an experiment utilizing multiple datasets collected within the same indoor environment over time. The objective of this experiment is to showcase the framework's capability to minimize the data collection efforts necessary for recalibrating the localization model in a dynamic indoor environment, thereby enhancing its long-term adaptability.

We outline the experimental configuration, where multiple datasets spanning different time periods within the same indoor environment are employed for training. Each dataset encapsulates variations in environmental conditions, APs characteristics, and other factors that naturally evolve over time, thus capturing the evolving dynamics of the environment. 

The meta-model is trained on this historical data, and its performance is subsequently evaluated on newly collected datasets within the same indoor environment.
We used the experimental dataset presented in \cite{ltmsupport} collected over a period of 15 months at one-month intervals.
We consider the data up to the $12^{th}$ month for the meta-training while the last 3 months' data were kept for the meta-testing.
We run the experiment using the same model architecture presented in TABLE \ref{tab:model} with the hyper-parameters of Section \ref{subsec:scalability}, and results are shown in Fig. \ref{fig:dynamic}
where we compare the localization performance of our meta-model (MI) on newly collected datasets with that of conventional method (RI) requiring full data recalibration.

Our solution achieves demonstrably faster convergence and higher accuracy. This translates to a significant reduction in the need for continuously collecting and annotating large amounts of data over time. For example, to achieve an accuracy of $2.5m$, FeMLoc (MI) only requires at most 100 SGD steps, whereas the baseline method (RI) needs 300-400 steps. This translates to more than 66.67\% faster adaptation speed, while guaranteeing up to 35.94\% improvement in localization accuracy after 200 steps. 
The EXP3 section of TABLE~\ref{tab:adaptationspeed} provides further quantitative analysis supporting this result.

While adaptation speed was the primary focus of this framework, we also evaluated FeMLoc's localization accuracy to provide a more comprehensive assessment. We benchmarked FeMLoc against state-of-the-art methods: support vector machine regression (SVR), k-nearest neighbors (KNN) regression ($k = 11$), a conventional deep neural network (RI), and transfer learning (TL). It's important to note that TL, in this context, represents a special case of FeMLoc (MI) applied with only one FL device.

We used one of the test datasets ($15^{th}$ month dataset) for this benchmark analysis. The results are shown in Fig.~\ref{fig:bchmk} representing the cumulative distribution functions (CDF) of the localization accuracy of the different models. As it can be seen, FeMLoc (MI) achieves superior performance compared to the other algorithms, with a localization accuracy or mean distance error (MDE) of 1.798 meters. This is notably better than the 1.866 meters achieved by TL, the next best performing method. This evaluation demonstrates that FeMLoc not only provides fast adaptation in dynamic indoor environments but also slightly improves localization accuracy.

Overall, this performance analysis substantiates that FeMLoc excels as an innovative solution to reduce the data collection burden for recalibrating indoor localization models over time. By leveraging historical datasets and exhibiting superior adaptability, the framework presents a transformative approach that holds promise for long-term, sustainable indoor positioning applications.

\begin{figure}[!t]
    \centering
    \includegraphics[scale=0.8]{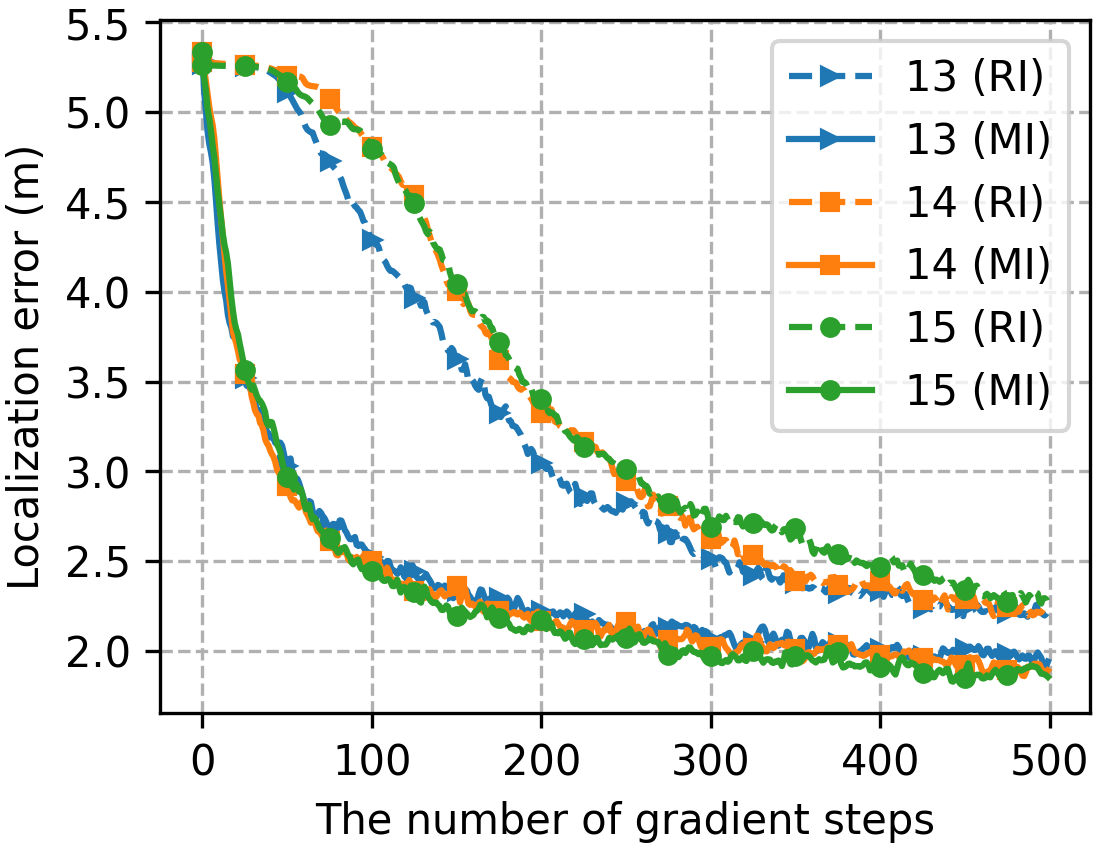}
    \caption{Learning curves in long-term support model adaption}
    \label{fig:dynamic}
\end{figure}

\begin{figure}[!t]
    \centering
    \includegraphics[scale=0.8]{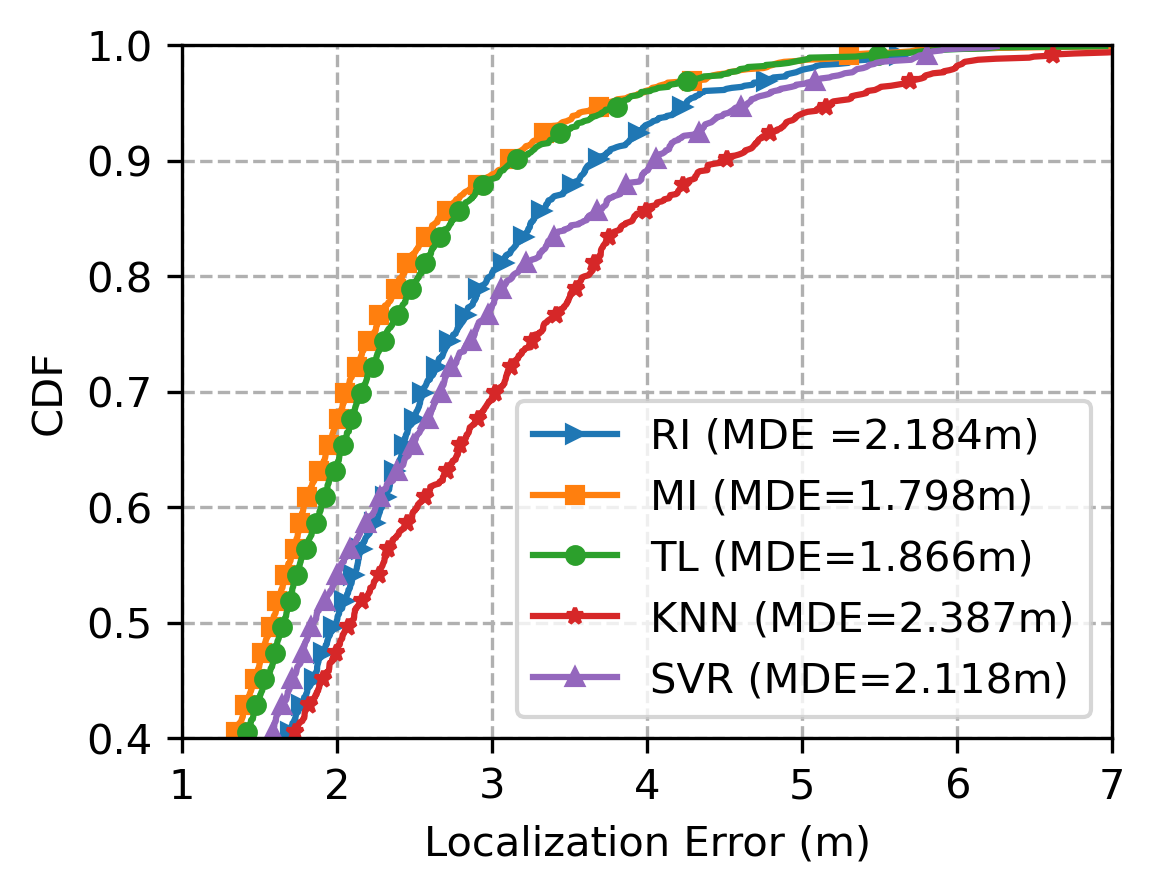}
    \caption{Performance comparison of FeMLoc in terms of localization accuracy in dynamic indoor environments}
    \label{fig:bchmk}
\end{figure}

\section{Limitations and future work}
\label{sec:limit}
In this section, we discuss the limitations of our proposed FeMLoc framework and outline potential avenues for future research to address these limitations and further advance the field of federated meta-learning for IoT applications.
\subsection{Privacy considerations}
While FeMLoc leverages privacy-preserving mechanisms inherent in FL to safeguard sensitive data during model training and aggregation, there remain challenges in ensuring comprehensive privacy protection, particularly in scenarios involving highly sensitive data or stringent privacy regulations. While FL offers advantages in privacy preservation by keeping training data on local devices, the exchange of model updates during meta-learning introduces new privacy risks. The meta-model, aggregating knowledge extracted from local models, may inadvertently reveal information about the specific training data used on individual devices. 

Furthermore, ensuring meta-model consistency across all devices poses another challenge. Malicious FL participants can intentionally manipulate the model, deviating it from learning the genuine data patterns from various devices, leading to potential vulnerabilities.
To address these challenges, future research could explore advanced privacy-enhancing techniques, such as homomorphic encryption and secure multi-party computation \cite{flprivacy}, to further fortify privacy guarantees in FeMLoc.

\subsection{Generalization to other IoT applications}
The core principles of FeMLoc, particularly its federated meta-learning approach, hold significant promise for extending its applicability to diverse IoT applications beyond indoor localization. However, generalizing FeMLoc requires careful consideration of the specific data characteristics, application requirements, and environmental conditions inherent to different use cases.
Future research endeavors could focus on investigating the adaptation of FeMLoc to various IoT scenarios, such as predictive maintenance, health monitoring, and environmental sensing. This exploration entails the examination of domain-specific model architectures, tailored data preprocessing techniques, and relevant performance evaluation metrics to ensure FeMLoc's effectiveness across different application domains.

Moreover, addressing the challenge of scalability and efficiency for large-scale deployments is crucial for the widespread adoption of FeMLoc. Research efforts in this area could explore techniques such as device selection strategies, gradient compression methods, and the design of model architectures with reduced computational requirements. These endeavors aim to enhance FeMLoc's scalability and efficiency when deployed across a significantly larger number and wider variety of resource-constrained IoT devices.

 \subsection{ Communication Aspects}
The communication overhead associated with exchanging model updates during federated learning poses a significant challenge, particularly when scaling to large numbers of devices. Moreover, the reliability and latency of communication channels can significantly impact the performance and stability of the learning process. To address this challenge, one potential direction involves exploring the integration of edge computing and edge intelligence techniques \cite{fledge}. By offloading computation and communication tasks from resource-constrained IoT devices to edge servers, the efficiency and scalability of FeMLoc can be enhanced.
Another interesting direction is the extension of FeMLoc over heterogeneous networks, wherein the feasibility and challenges of deploying FeMLoc in scenarios with diverse network conditions, such as cellular networks, Wi-Fi, and Bluetooth, are explored. This entails the development of robust communication protocols capable of effectively operating across heterogeneous network types and adapting to varying levels of connectivity.

Additionally, the development of communication-adaptive learning algorithms represents an intriguing area of investigation. These algorithms dynamically adapt to the characteristics of the communication channel, such as bandwidth and latency \cite{flcom}. This adaptation may involve adjusting the frequency of update exchanges or implementing strategies to mitigate potential packet loss or delays, thereby optimizing communication efficiency.
Furthermore, research efforts can focus on exploring efficient communication techniques to reduce communication overhead, including gradient compression, knowledge distillation, and federated pruning. These techniques aim to minimize the size of model updates transmitted between devices and the central server, thereby improving scalability and resource efficiency in  FeMLoc.

\section{Conclusion}
\label{sec:conclusion}


In this paper, we introduced FeMLoc, a novel federated meta-learning framework tailored for indoor localization in IoT environments. Leveraging the synergy between federated learning and meta-learning, FeMLoc addresses the challenges of data heterogeneity, privacy preservation, and model adaptability inherent in indoor localization tasks.
Through extensive experimentations and analysis, we demonstrated the effectiveness and scalability of FeMLoc in various indoor environments. Our results highlight FeMLoc's ability to achieve superior localization accuracy with minimal data collection and model adaptation efforts. Moreover, the collaborative nature of federated learning ensures privacy preservation while enabling knowledge sharing across diverse IoT devices.
While FeMLoc represents a significant advancement in federated meta-learning for IoT applications, there exist several challenges and opportunities for future research. By addressing the limitations discussed and exploring the proposed future directions, we can further enhance the scalability, communication efficiency, and privacy-preserving capabilities of FeMLoc, ultimately advancing its applicability and impact in the field of IoT-enabled systems and services.

\appendices


\section{}
\subsection{Proof of Proposition \ref{prop:steps}}
\label{apdx:prop2}
\begin{lemma}
Let $U$ and $\epsilon$ be positive and strictly positive real numbers, respectively. If $U < \epsilon$, then:
\begin{enumerate}
    \item there exists a positive real number $\epsilon_0$ such that \mbox{$U \leq \epsilon_0 < \epsilon$}.
    \item there exists a positive real number $\epsilon_1$ such that \mbox{$U < \epsilon_1 \leq \epsilon$}.
\end{enumerate}
    \label{lem:xbound}
\end{lemma}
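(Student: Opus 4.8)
\textbf{Proof plan for Lemma \ref{lem:xbound}.}

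The statement is an elementary order-theoretic fact about the real line, so the plan is simply to exhibit explicit witnesses for $\epsilon_0$ and $\epsilon_1$ using the midpoint between $U$ and $\epsilon$, together with the Archimedean/density properties of $\mathbb{R}$. The only mild subtlety is that two different ``sidedness'' conditions are requested: part 1 wants a witness that is at least $U$ but strictly below $\epsilon$, while part 2 wants a witness strictly above $U$ but at most $\epsilon$. Both are handled by the same construction.

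First I would treat part 1. Set $\epsilon_0 := \tfrac{U+\epsilon}{2}$, the midpoint of the interval $[U,\epsilon)$. Since $U < \epsilon$ we immediately get $U < \epsilon_0 < \epsilon$, which in particular gives $U \leq \epsilon_0 < \epsilon$; and $\epsilon_0 > 0$ because $U \geq 0$ and $\epsilon > 0$ force $U+\epsilon > 0$. (If one wants to allow the degenerate reading $U \le \epsilon_0$ with possible equality, one may alternatively just take $\epsilon_0 = U$ when $U>0$, but the midpoint choice works uniformly and needs no case split.) For part 2, take the \emph{same} midpoint $\epsilon_1 := \tfrac{U+\epsilon}{2}$; then $U < \epsilon_1 \leq \epsilon$ holds for exactly the same reason, and positivity of $\epsilon_1$ is identical. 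Thus a single value serves as witness for both clauses.

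There is essentially no hard part here; the ``obstacle,'' such as it is, is only bookkeeping about weak versus strict inequalities and confirming positivity from the hypotheses $U \ge 0$, $\epsilon > 0$. I would close the proof by remarking that the lemma is the precise tool needed later: whenever an upper bound of the form $U < \varepsilon$ is derived for some squared-gradient-type quantity, it may be freely replaced by a bound of the form $U \le \varepsilon_0 < \varepsilon$ (or $U < \varepsilon_1 \le \varepsilon$) so that the target accuracy level can be attained with a strict margin, which is exactly what is invoked in the proofs of Proposition \ref{prop:steps} and Proposition \ref{prop:prop3}.
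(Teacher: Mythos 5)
Your proof is correct and matches the paper's argument: the paper likewise takes the midpoint $\tfrac{U+\epsilon}{2}$ as the witness for part 1 and $U+\tfrac{\epsilon-U}{2}$ (the same point) for part 2, verifying the required strict and weak inequalities directly. Your observation that a single witness serves both clauses is a minor simplification but does not change the substance.
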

\begin{proof}
 Given $U < \epsilon$, \\
 \begin{enumerate}
     \item we define $\epsilon_0$ as the midpoint between $U$ and $\epsilon$, that is  $\epsilon_0 = \frac{U + \epsilon}{2}$.
It follows that: (i) $\epsilon_0 > U$ because it is the midpoint between $U$ and $\epsilon$.
(ii) $\epsilon_0 < \epsilon$ for the same reason.
\item consider the positive real number $\epsilon_1$ defined as
 \mbox{$\epsilon_1 = U + \frac{\epsilon - U}{2}$}. Then: (i) $ U < \epsilon_1 $ because we're adding a positive value to $U$.
    (ii) $ \epsilon_1 \leq \epsilon $ because we're taking the midpoint between $U$ and $\epsilon$.
 \end{enumerate}
Therefore, the conditions of the lemma are satisfied.
\end{proof}
To simplify the notations, let $\mathcal{L}_\kappa^n = \mathcal{L}_k(f_{\boldsymbol{\Omega}_\kappa^{n}}, \mathcal{D}_\kappa^q)$ and $\nabla_{\boldsymbol{\Omega}_\kappa^{n} } \mathcal{L}(f_{\boldsymbol{\Omega}_\kappa^{n}}, \mathcal{D}_k^q) = \nabla\mathcal{L}_\kappa^n$.

By the reverse triangle inequality
\begin{equation*}
    \left |  \left \| \nabla\mathcal{L}_\kappa^\varepsilon  \right \| - \left \| \nabla\mathcal{L}_\kappa^i \right \|\right | \leq \left \| \nabla\mathcal{L}_\kappa^\varepsilon- \nabla\mathcal{L}_\kappa^i \right \|
\end{equation*}
By squaring both sides and using Assumption \ref{ass:smoothness}.a, we have
\begin{equation}
    \left \| \nabla\mathcal{L}_\kappa^\varepsilon \right \|^2\leq \delta_1^2\left \| \boldsymbol\Omega_\kappa^\varepsilon- \boldsymbol\Omega_\kappa^i \right \|^2+2\left \| \nabla\mathcal{L}_\kappa^\varepsilon \right \|\cdot \left \| \nabla\mathcal{L}_\kappa^i \right \|-\left \| \nabla\mathcal{L}_\kappa^i \right \|^2.
    \label{eq:1}
\end{equation}
With $N_\kappa^i$ the number of SGD steps performs to reach the $\varepsilon-$accuracy and using Lemma \ref{lem:inner}, we have
\begin{equation}
    \left \| \boldsymbol\Omega_\kappa^\varepsilon- \boldsymbol\Omega_\kappa^i \right \|^2 \approx \left ( \mu N_\kappa^i \right)^2\left \| \nabla\mathcal{L}_\kappa^i \right \|^2
    \label{eq:2}
\end{equation}
Putting \eqref{eq:2} in \eqref{eq:1} and rearranging the terms, we obtain
\begin{equation}
    \left \| \nabla\mathcal{L}_\kappa^\varepsilon \right \|^2\leq \left ( \delta_1\mu N_\kappa^i-1 \right)^2\left \| \nabla\mathcal{L}_\kappa^i \right \|^2 +2\left \| \nabla\mathcal{L}_\kappa^\varepsilon \right \|\cdot \left \| \nabla\mathcal{L}_\kappa^i \right \|
    \label{eq:3}
\end{equation}

On the other hand, the $\varepsilon-$accuracy implies that \mbox{$\left \| \nabla\mathcal{L}_\kappa^\varepsilon \right \|^2\ < \varepsilon$}. Then, using Lemma \ref{lem:xbound}.1 and setting \mbox{$ G^i = \left \|\nabla_{\boldsymbol\Omega_k^i}\mathcal L_k^i \right \|\cdot \left \|\nabla_{\boldsymbol\Omega_k^\varepsilon}\mathcal L_k^\varepsilon \right \|$} in \eqref{eq:3}, we have 
\begin{equation}
    \left ( \delta_1\mu N_\kappa^i-1 \right)^2\left \| \nabla\mathcal{L}_\kappa^i \right \|^2 +2\left \| \nabla\mathcal{L}_\kappa^\varepsilon \right \|\cdot \left \| \nabla\mathcal{L}_\kappa^i \right \| < \varepsilon
    \label{eq:4}
\end{equation}
We obtain the desired result by rearranging the terms in \eqref{eq:4}.
\subsection{Proof of Corollary \ref{cor:steps}}
\label{apdx:cor2}
This proof of corollary is straightforward. Indeed, with $i=0$ for the random initialization scheme (RI) and $i=m$ for the meta-initialization (MI), the  $\varepsilon-$accuracy is achieved after $ N_\kappa^0$ and $ N_\kappa^m$, respectively. We obtain the desired result by differencing the two expressions and rearranging the terms. 

\section{}
\label{apdx:prop3}
1) In Corollary \ref{cor:steps}, it is obvious that for $N_\kappa^m$ to be less than $N_\kappa^0$, say $N_\kappa^m < N_\kappa^0$, the right side of the inequality has to be less than zero. That is:
\begin{equation*}
    \varepsilon \Delta_2 -2G^o\left \|\nabla\mathcal L_k^m \right \|\Delta_1 < 0
\end{equation*}
By expanding and simplifying the terms, and given Assumption \ref{ass:vbound}.b we obtain the following:
\begin{equation*}
    \left \|\nabla\mathcal L_k^m \right \| +\left \|\nabla\mathcal L_k^0 \right \| \leq \frac{2G^0}{\varepsilon}\left \|\nabla\mathcal L_k^m \right \| <  \frac{2\zeta\sqrt{\varepsilon }}{\varepsilon}\left \|\nabla\mathcal L_k^m \right \| 
\end{equation*}
Now considering Lemma \ref{lem:xbound}.2 we can rewrite the previous inequality as:
\begin{equation*}
    \left \|\nabla\mathcal L_k^m \right \| < \sqrt{\varepsilon_m }\leq  \frac{\varepsilon }{2\zeta \sqrt{\varepsilon }-\varepsilon }\zeta 
\end{equation*}
Squaring this expression yields the result in Proposition \ref{prop:prop3}.1\\

2) The term $\Delta_1$ in  Corollary \ref{cor:steps}, with Assumption  \ref{ass:smoothness}.a yields the following:
\begin{equation}
    \left | \Delta_1  \right |^2\leq \delta_1^2\left \| \boldsymbol\Omega_\kappa^m- \boldsymbol\Omega_\kappa^0 \right \|^2
    \label{eq:5}
\end{equation}
Using Lemma \ref{lem:outer} and Assumption \ref{ass:vbound}, \eqref{eq:5} can be rewritten as:
\begin{equation}
    \left | \Delta_1  \right |^2 < \left ( \delta_1\eta R_\kappa\zeta  \right )^2
    \label{eq:6}
\end{equation}

Similarly, with Assumption  \ref{ass:smoothness}.b and Lemma \ref{lem:outer}, the term $\Delta_2$ satisfies
\begin{equation}
    \left | \Delta_2  \right |^2 < \left ( \delta_2\eta R_\kappa\zeta  \right )^2
    \label{eq:7}
\end{equation}

Combining \eqref{eq:6} and \eqref{eq:7} with Corollary \ref{cor:steps}, the right side of the inequality in Corollary \ref{cor:steps}, denoted Q satisfies
\begin{equation*}
    Q \leq\frac{\eta R_\kappa\zeta}{\left ( \delta_1\Upsilon \mu \right )^2}\left ( \delta _2\varepsilon +2G^0\delta_1 \left \|\nabla\mathcal L_k^m \right \|\right )
\end{equation*}
With $\left \|\nabla\mathcal L_k^m \right \|^2<\varepsilon_m$ we get and \mbox{$G^0<\zeta\sqrt{\varepsilon}$}
\begin{equation}
    (N_\kappa^m)^2- (N_\kappa^0)^2 < Q <  \frac{\eta R_\kappa\zeta }{\left ( \delta_1\Upsilon \mu \right )^2}\left ( \delta _2\varepsilon +2\zeta\delta_1 \sqrt{\varepsilon _m\varepsilon}\right ).
\end{equation}

3) Assumption \ref{ass:lipsz} and Lemma \ref{lem:inner} hold the following
\begin{equation}
    \left | \mathcal{L}_\kappa^\varepsilon -  \mathcal{L}_\kappa^0\right |\leq \gamma \mu N_\kappa^0\left \| \nabla\mathcal L_k^0 \right \| < \gamma \mu N_\kappa^0\zeta
    \label{eq:7}
\end{equation}
Similarly, we have
\begin{equation}
    \left | \mathcal{L}_\kappa^\varepsilon -  \mathcal{L}_\kappa^m\right |\leq \gamma \mu N_\kappa^m\left \| \nabla\mathcal L_k^m \right \| < \gamma \mu N_\kappa^m\sqrt{\varepsilon _m}
    \label{eq:8}
\end{equation}
Then, combining  \eqref{eq:7} and \eqref{eq:8} yields the result in Proposition \ref{prop:prop3}.3\\







 


\bibliographystyle{IEEEtran}
\bibliography{references}

\begin{thebibliography}{10}
\providecommand{\url}[1]{#1}
\csname url@samestyle\endcsname
\providecommand{\newblock}{\relax}
\providecommand{\bibinfo}[2]{#2}
\providecommand{\BIBentrySTDinterwordspacing}{\spaceskip=0pt\relax}
\providecommand{\BIBentryALTinterwordstretchfactor}{4}
\providecommand{\BIBentryALTinterwordspacing}{\spaceskip=\fontdimen2\font plus
\BIBentryALTinterwordstretchfactor\fontdimen3\font minus \fontdimen4\font\relax}
\providecommand{\BIBforeignlanguage}[2]{{%
\expandafter\ifx\csname l@#1\endcsname\relax
\typeout{** WARNING: IEEEtran.bst: No hyphenation pattern has been}%
\typeout{** loaded for the language `#1'. Using the pattern for}%
\typeout{** the default language instead.}%
\else
\language=\csname l@#1\endcsname
\fi
#2}}
\providecommand{\BIBdecl}{\relax}
\BIBdecl

\bibitem{survey5g}
Q.~Wu, J.~Xu, Y.~Zeng, D.~W.~K. Ng, N.~Al-Dhahir, R.~Schober, and A.~L. Swindlehurst, ``A comprehensive overview on {5G-and-Beyond} networks with {UAVs}: From communications to sensing and intelligence,'' \emph{IEEE Journal on Selected Areas in Communications}, vol.~39, no.~10, pp. 2912--2945, 2021.

\bibitem{nabiew}
M.~N. Ndiaye, E.~H. Bergou, M.~Ghogho, and H.~E. Hammouti, ``{Age-of-Updates Optimization for {UAV}-assisted Networks},'' in \emph{{IEEE} Global Communications Conference, {GLOBECOM}}, 2022, pp. 450--455.

\bibitem{horsmanheimo2019indoor}
S.~Horsmanheimo, S.~Lembo, L.~Tuomimaki, S.~Huilla, P.~Honkamaa, M.~Laukkanen, and P.~Kemppi, ``{Indoor Positioning Platform to Support 5G Location based Services},'' in \emph{IEEE International Conference on Communications Workshops (ICC Workshops)}, 2019, pp. 1--6.

\bibitem{Jouhari2022ASO}
M.~Jouhari, N.~Saeed, M.-S. Alouini, and E.~M. Amhoud, ``A survey on scalable {LoRaWAN} for massive {IoT}: {R}ecent advances, potentials, and challenges,'' \emph{IEEE Communications Surveys \& Tutorials}, 2023.

\bibitem{furfari2019next}
F.~Furfari, A.~Crivello, P.~Barsocchi, F.~Palumbo, and F.~Potort{\`\i}, ``{What is Next for Indoor LocaliZation? Taxonomy, Protocols, and Patterns for Advanced Location Based Services},'' in \emph{International Conference on Indoor Positioning and Indoor Navigation (IPIN)}, 2019, pp. 1--8.

\bibitem{Etiabi2023FederatedLB}
Y.~Etiabi, W.~Njima, and E.-M. Amhoud, ``Federated learning based hierarchical {3D} indoor localization,'' in \emph{IEEE Wireless Communications and Networking Conference ({WCNC})}, 2023.

\bibitem{Etiabi2022SpreadingFA}
Y.~Etiabi, M.~Jouhari, A.~Burg, and E.-M. Amhoud, ``Spreading factor and {RSSI} for localization in {LoRa} networks with deep reinforcement learning approach,'' in \emph{IEEE 97th Vehicular Technology Conference}, 2023.

\bibitem{DBLP:journals/corr/abs-2210-14510}
A.~Foliadis, M.~H.~C. Garc{\'{\i}}a, R.~A. Stirling{-}Gallacher, and R.~S. Thom{\"{a}}, ``Multi-environment based meta-learning with {CSI} fingerprints for radio based positioning,'' \emph{CoRR}, vol. abs/2210.14510, 2022.

\bibitem{nessa2020survey}
A.~Nessa, B.~Adhikari, F.~Hussain, and X.~N. Fernando, ``{A Survey of Machine Learning for Indoor Positioning},'' \emph{IEEE access}, vol.~8, pp. 214\,945--214\,965, 2020.

\bibitem{McMahan17}
H.~B. McMahan, E.~Moore, D.~Ramage, S.~Hampson, and B.~A. y~Arcas, ``Communication-efficient learning of deep networks from decentralized data,'' in \emph{Proceedings of the 20th International Conference on Artificial Intelligence and Statistics (AISTATS)}, 2017.

\bibitem{FinnAL17}
C.~Finn, P.~Abbeel, and S.~Levine, ``Model-agnostic meta-learning for fast adaptation of deep networks,'' in \emph{Proceedings of the 34th International Conference on Machine Learning, {ICML}}, 2017.

\bibitem{Etiabi2022FederatedDB}
Y.~Etiabi and E.~M. Amhoud, ``{Federated Distillation based Indoor Localization for {IoT} Networks},'' \emph{IEEE Sensors Journal}, vol.~24, no.~7, pp. 11\,678--11\,692, 2024.

\bibitem{rev1p1}
L.~You, S.~Liu, B.~Zuo, C.~Yuen, D.~Niyato, and H.~Vincent~Poor, ``Federated and asynchronized learning for autonomous and intelligent things,'' \emph{IEEE Network}, 2023.

\bibitem{hintikka}
M.~Delamou, A.~Bazzi, M.~Chafii, and E.-M. Amhoud, ``Deep learning-based estimation for multitarget radar detection,'' in \emph{IEEE Wireless Communications and Networking Conference ({WCNC})}, 2023.

\bibitem{delaccess}
M.~Delamou, G.~Noubir, S.~Dang, and E.~M. Amhoud, ``An efficient {OFDM}-based monostatic radar design for multitarget detection,'' \emph{IEEE Access}, vol.~11, pp. 135\,090--135\,105, 2023.

\bibitem{DBLP:journals/icl/WeiYWWZ23}
W.~Wei, J.~Yan, X.~Wu, C.~Wang, and G.~Zhang, ``A meta-learning approach for device-free indoor localization,'' \emph{{IEEE} Commun. Lett.}, vol.~27, no.~3, pp. 846--850, 2023.

\bibitem{DBLP:conf/icc/GaoZKYX022}
J.~Gao, C.~Zhang, Q.~Kong, F.~Yin, L.~Xu, and K.~Niu, ``{MetaLoc}: Learning to learn indoor {RSS} fingerprinting localization over multiple scenarios,'' in \emph{{IEEE} International Conference on Communications, {ICC}}, 2022, pp. 3232--3237.

\bibitem{DBLP:conf/globecom/Long0XH22}
Y.~Long, Y.~Zeng, X.~Xu, and Y.~Huang, ``{Environment-Aware Wireless Localization Enabled by Channel Knowledge Map},'' in \emph{{IEEE} Global Communications Conference, {GLOBECOM}}, 2022, pp. 5354--5359.

\bibitem{ujiindoorloc}
J.~Torres-Sospedra, R.~Montoliu, A.~M. Us{\'o}, J.~P. Avariento, T.~J. Arnau, M.~Benedito-Bordonau, and J.~Huerta, ``{{UJIIndoorLoc}: A New Multi-Building and Multi-Floor Database for {WLAN} Fingerprint-based Indoor Localization Problems},'' \emph{International Conference on Indoor Positioning and Indoor Navigation (IPIN)}, pp. 261--270, 2014.

\bibitem{lohan_elena_simona_2017_889798}
E.~S. Lohan, J.~Torres-Sospedra, P.~Richter, H.~Leppäkoski, J.~Huerta, and A.~Cramariuc, ``{Crowdsourced WiFi Database and Benchmark Software for Indoor Positioning},'' Sep. 2017.

\bibitem{philipp_richter_2018_1161525}
P.~Richter, E.~S. Lohan, and J.~Talvitie, ``{WLAN (WiFi) RSS Database for Fingerprinting Positioning},'' Jan. 2018.

\bibitem{49yg-5d21-19}
J.~L. Salazar~González, L.~M. Soria~Morillo, J.~A. Álvarez García, F.~Enríquez, and A.~R. Jimenez~Ruiz, ``{Energy-Efficient Indoor Localization WiFi-Fingerprint Dataset},'' 2019.

\bibitem{ltmsupport}
G.~M. Mendoza-Silva, P.~Richter, J.~Torres-Sospedra, E.~S. Lohan, and J.~Huerta, ``{Long-Term Wi-Fi Fingerprinting Dataset and Supporting Material},'' Nov. 2017.

\bibitem{hybrid}
A.~N. Nor~Hisham, Y.~H. Ng, C.~K. Tan, and D.~Chieng, ``{Hybrid Wi-Fi and BLE Fingerprinting Dataset for Multi-Floor Indoor Environments with Different Layouts},'' Nov. 2022.

\bibitem{utsindoorloc}
X.~Song, X.~Fan, C.~Xiang, Q.~Ye, L.~Liu, Z.~Wang, X.~He, N.~Yang, and G.~Fang, ``{A Novel Convolutional Neural Network Based Indoor Localization Framework With WiFi Fingerprinting},'' \emph{IEEE Access}, vol.~PP, pp. 1--1, 08 2019.

\bibitem{utm}
A.~Abdullah, M.~Haris, O.~A. Aziz, R.~A. Rashid, and A.~S. Abdullah, ``{UTMInDualSymFi: A Dataset of Dual-band Wi-Fi RSSI Data in Symmetric Indoor Environments},'' Oct. 2022.

\bibitem{dsi}
A.~Moreira, I.~Silva, and J.~Torres-Sospedra, ``{The DSI Dataset for Wi-Fi Fingerprinting using Mobile Devices},'' Apr. 2020.

\bibitem{msi}
A.~Moreira, M.~J. Nicolau, I.~Silva, J.~Torres-Sospedra, C.~Pendão, and F.~Meneses, ``{Wi-Fi Fingerprinting Dataset with Multiple Simultaneous Interfaces},'' Sep. 2019.

\bibitem{mtut}
E.~S. Lohan, J.~Torres-Sospedra, and A.~Gonzalez, ``{WiFi RSS Measurements in Tampere University Multi- Building Campus, 2017},'' Aug. 2021.

\bibitem{knn}
S.~Liu, R.~De~Lacerda, and J.~Fiorina, ``{Performance Analysis of Adaptive K for Weighted K-Nearest Neighbor based Indoor Positioning},'' in \emph{IEEE 95th Vehicular Technology Conference}, 2022, pp. 1--5.

\bibitem{gp}
Z.~Tang, S.~Li, K.~S. Kim, and J.~Smith, ``{Multi-Output Gaussian Process-based Data Augmentation for Multi-Building and Multi-Floor Indoor Localization},'' in \emph{IEEE International Conference on Communications Workshops}, 2022, pp. 361--366.

\bibitem{zhou2022deepvip}
B.~Zhou, Z.~Gu, F.~Gu, P.~Wu, C.~Yang, X.~Liu, L.~Li, Y.~Li, and Q.~Li, ``{DeepVIP}: Deep learning-based vehicle indoor positioning using smartphones,'' \emph{IEEE Transactions on Vehicular Technology}, vol.~71, no.~12, pp. 13\,299--13\,309, 2022.

\bibitem{yang2023deepwipos}
X.~Yang, Y.~Zhuang, F.~Gu, M.~Shi, X.~Cao, Y.~Li, B.~Zhou, and L.~Chen, ``{DeepWiPos}: A deep learning-based wireless positioning framework to address fingerprint instability,'' \emph{IEEE Transactions on Vehicular Technology}, 2023.

\bibitem{liu2022uwb}
Q.~Liu, Z.~Yin, Y.~Zhao, Z.~Wu, and M.~Wu, ``{UWB LOS/NLOS} identification in multiple indoor environments using deep learning methods,'' \emph{Physical Communication}, vol.~52, p. 101695, 2022.

\bibitem{FedLoc2020}
F.~Yin, Z.~Lin, Y.~Xu, Q.~Kong, D.~Li, S.~Theodoridis, and Shuguang, ``{{FedLoc}: Federated Learning Framework for Data-Driven Cooperative Localization and Location Data Processing},'' \emph{IEEE Open Journal of Signal Processing}, vol.~1, pp. 187--215, 2020.

\bibitem{ibnatta2022indoor}
Y.~Ibnatta, M.~Khaldoun, and M.~Sadik, ``{The Indoor Localization System Based on Federated Learning and RSS Using {UWB-OFDM}},'' in \emph{International Conference on Advanced Intelligent Systems for Sustainable Development}.\hskip 1em plus 0.5em minus 0.4em\relax Springer, 2022, pp. 152--167.

\bibitem{9761235}
X.~Cheng, C.~Ma, J.~Li, H.~Song, F.~Shu, and J.~Wang, ``{Federated Learning-Based Localization With Heterogeneous Fingerprint Database},'' \emph{IEEE Wireless Communications Letters}, vol.~11, no.~7, pp. 1364--1368, 2022.

\bibitem{guo2023fedpos}
J.~Guo, I.~W.-H. Ho, Y.~Hou, and Z.~Li, ``{{FedPos}: A Federated Transfer Learning Framework for {CSI}-based Wi-Fi Indoor Positioning},'' \emph{IEEE Systems Journal}, 2023.

\bibitem{rev1p2}
L.~You, S.~Liu, Y.~Chang, and C.~Yuen, ``A triple-step asynchronous federated learning mechanism for client activation, interaction optimization, and aggregation enhancement,'' \emph{IEEE Internet of Things Journal}, pp. 24\,199--24\,211, 2022.

\bibitem{chen2022few}
B.-J. Chen and R.~Y. Chang, ``{Few-Shot Transfer Learning for Device-Free Fingerprinting Indoor Localization},'' in \emph{ICC IEEE International Conference on Communications}, 2022, pp. 4631--4636.

\bibitem{maghdid2022enabling}
H.~S. Maghdid, K.~Z. Ghafoor, A.~Al-Talabani, A.~S. Sadiq, P.~K. Singh, and D.~B. Rawat, ``{Enabling Accurate Indoor Localization for Different Platforms for Smart Cities using a Transfer Learning Algorithm},'' \emph{Internet Technology Letters}, vol.~5, no.~1, p. e200, 2022.

\bibitem{stahlke2022transfer}
M.~Stahlke, T.~Feigl, M.~H.~C. Garc{\'\i}a, R.~A. Stirling-Gallacher, J.~Seitz, and C.~Mutschler, ``{Transfer Learning to Adapt 5G AI-based Fingerprint Localization Across Environments},'' in \emph{IEEE 95th Vehicular Technology Conference}, 2022, pp. 1--5.

\bibitem{owfi2023meta}
A.~Owfi, C.~Lin, L.~Guo, F.~Afghah, J.~Ashdown, and K.~Turck, ``{A Meta-Learning based Generalizable Indoor Localization Model using Channel State Information},'' \emph{arXiv preprint arXiv:2305.13453}, 2023.

\bibitem{metalocj}
J.~Gao, D.~Wu, F.~Yin, Q.~Kong, L.~Xu, and S.~Cui, ``{MetaLoc}: Learning to learn wireless localization,'' \emph{IEEE Journal on Selected Areas in Communications}, vol.~41, no.~12, pp. 3831--3847, 2023.

\bibitem{DBLP:journals/corr/abs-2210-13111}
X.~Liu, Y.~Deng, A.~Nallanathan, and M.~Bennis, ``{Federated and Meta Learning Over Non-Wireless and Wireless Networks: {A} Tutorial},'' \emph{CoRR}, vol. abs/2210.13111, 2022.

\bibitem{fedmeta}
B.-J. Chen, R.~Y. Chang, and H.~V. Poor, ``{Fast-Adapting Environment-Agnostic Device-Free Indoor Localization via Federated Meta-Learning},'' in \emph{{IEEE} International Conference on Communications}, 2023.

\bibitem{powed}
J.~Torres-Sospedra, R.~Montoliu, S.~Trilles, O.~Belmonte, and J.~Huerta, ``{Comprehensive Analysis of Distance and Similarity Measures for {Wi-Fi} Fingerprinting Indoor Positioning Systems},'' \emph{Expert Syst. Appl.}, vol.~42, no.~23, p. 9263–9278, dec 2015.

\bibitem{yuan2020federated}
H.~Yuan and T.~Ma, ``{Federated Accelerated Stochastic Gradient Descent},'' \emph{Advances in Neural Information Processing Systems}, vol.~33, pp. 5332--5344, 2020.

\bibitem{ntk}
A.~Jacot, F.~Gabriel, and C.~Hongler, ``{Neural Tangent Kernel: Convergence and Generalization in Neural Networks},'' \emph{Proceedings of the 32nd International Conference on Neural Information Processing Systems}, p. 8580–8589, 2018.

\bibitem{flprivacy}
H.~Lian, Y.~Yang, and Y.~Zhao, ``Efficient and strong symmetric password authenticated key exchange with identity privacy for iot,'' \emph{{IEEE} Internet Things J.}, vol.~10, no. 6, March 15, pp. 4725--4734, 2023.

\bibitem{fledge}
L.~Xu, H.~Sun, H.~Zhao, W.~Zhang, H.~Ning, and H.~Guan, ``Accurate and efficient federated-learning-based edge intelligence for effective video analysis,'' \emph{{IEEE} Internet Things J.}, vol.~10, no.~14, pp. 12\,169--12\,177, 2023.

\bibitem{flcom}
E.~Malan, V.~Peluso, A.~Calimera, E.~Macii, and P.~Montuschi, ``Automatic layer freezing for communication efficiency in cross-device federated learning,'' \emph{{IEEE} Internet Things J.}, vol.~11, no.~4, pp. 6072--6083, 2024.

\end{thebibliography}


 

\end{document}